\newcommand{\bra}[1]{\langle#1|}
\newcommand{\ip}[1]{\langle {#1} \rangle}
\newcommand{\norm}[1]{\| #1 \|}
\DeclareMathOperator{\tr}{tr}
\DeclareMathOperator{\poly}{poly}
\newcommand{\beq}{\begin{equation}}
\newcommand{\eeq}{\end{equation}}
\newcommand{\beqn}{\begin{equation*}}
\newcommand{\eeqn}{\end{equation*}}
\newcommand{\C}{\ensuremath{\mathbb{C}}}
\newcommand{\R}{\ensuremath{\mathbb{R}}}
\newcommand{\cE}{\ensuremath{\mathcal{E}}}
\newtheoremstyle{promise}
{3pt}
{3pt}
{}
{}
{\itshape}
{:}
{.5em}
{}
\theoremstyle{promise}
\newcommand{\be}{\begin{eqnarray}}
\newcommand{\ee}{\end{eqnarray}}
\newcommand{\eps}{\varepsilon}
  \theoremstyle{plain}
  \newtheorem{theorem}{Theorem}
  \newtheorem{lemma}[theorem]{Lemma}
  \newtheorem{claim}[theorem]{Claim}
  \theoremstyle{definition}
  \newtheorem{definition}{Definition}
  \theoremstyle{plain}
\newcommand{\ii}{\mathbb{I}}
\newcommand{\oo}[1]{\Theta\left(#1\right)} 
\newcommand{\ket}[1]{\vert #1 \rangle}
\newcommand{\braket}[2]{\langle #1 \vert #2 \rangle}
\newcommand{\nc}{\newcommand}
\nc{\bbC}{\mathbb{C}}
\nc{\bbI}{\mathbb{I}}
\nc{\bitem}{\begin{itemize}}
\nc{\eitem}{\end{itemize}}
\nc\benum{\begin{enumerate}}
\nc\eenum{\end{enumerate}}
\newcommand{\secref}[1]{Section~\ref{sec:#1}}
\newcommand{\lemref}[1]{Lemma~\ref{lem:#1}}
\newcommand{\thmref}[1]{Theorem~\ref{thm:#1}}
\newcommand{\clmref}[1]{Claim~\ref{clm:#1}}
\newcommand{\tabref}[1]{Table~\ref{tab:#1}}
\def\cE{\mathcal{E}}
\def\cH{\mathcal{H}}
\newcommand{\proj}[1]{\ket{#1}\bra{#1}}
\def\be#1\ee{\begin{equation}#1\end{equation}}
\def\bea#1\eea{\begin{eqnarray}#1\end{eqnarray}}
\def\beas#1\eeas{\begin{eqnarray*}#1\end{eqnarray*}}
\def\ba#1\ea{\begin{align}#1\end{align}}
\def\bas#1\eas{\begin{align*}#1\end{align*}}
\def\bpm#1\epm{\begin{pmatrix}#1\end{pmatrix}}
\def\eq#1{(\ref{eq:#1})}
\def\L{\left} 
\def\R{\right}
\def\ot{\otimes}
\newcommand{\ignore}[1]{}
\def\begsub#1#2\endsub{\begin{subequations}\label{eq:#1}\begin{align}#2\end{align}\end{subequations}}
\nc\mnb[1]{\medskip\noindent{\bf #1}}
\nc\qand{\qquad\text{and}\qquad}
\begin{document}

\title{\Large \textbf{Local tests of global entanglement \\and a counterexample to the generalized area law}}
\author{Dorit Aharonov, Aram W. Harrow, Zeph Landau, Daniel Nagaj,
Mario Szegedy, Umesh Vazirani}

\maketitle
\vspace{-5mm}

\begin{abstract}
We introduce a technique for applying quantum expanders in a distributed fashion, and use it to 
solve two basic questions: testing whether a bipartite quantum state shared by two parties is
the maximally entangled state and disproving a generalized area law. In the process these two 
questions which appear completely unrelated turn out to be two sides of the same coin. Strikingly
in both cases a constant amount of resources are used to verify a global property.
\end{abstract}

\section{Introduction}

In this paper we address two basic questions:

\begin{enumerate}
\item Can Alice and Bob test whether their joint state is maximally entangled while exchanging only a constant number of qubits?
More precisely, Alice and Bob hold two halves of a quantum state 
$\ket{\psi}$ on a $D^2$-dimensional space for large $D$, and would like to check whether $\ket{\psi}$ is the maximally entangled state
$\ket{\phi_D} = \frac{1}{\sqrt{D}} \sum_x \ket{x}\ket{x}$ or whether it is orthogonal to that state. 
The first entanglement tester  
is the {\em hashing protocol} of the influential 1996 paper by Bennett, DiVincenzo, Smolin and Wootters~\cite{BDSW96}; further results
are summarized in table \tabref{ent-test}. Entanglement testing has found various applications, including entanglement distillation and error correction~\cite{BDSW96}, state authentication~\cite{BCGST02} and bounding the communication capacities of 
bipartite unitary operators~\cite{HL07}. 
As can be seen from this table, 
all known protocols for this task 
require resources (communication, shared randomness or catalyst) 
which grow with $D$ \cite{BDSW96, BCGST02, HL07}.  

\item Is there a counterexample to the generalized area law? A sweeping conjecture in condensed matter physics, and one of the most
important open questions in quantum Hamiltonian complexity theory, is
the so called ``area law,'' which asserts that ground states of quantum
many body systems on a lattice 
have limited entanglement. Specifically, assume the system is described by a gapped local Hamiltonian\footnote{Here and later, by gapped local Hamiltonian we mean a Hamiltonian whose difference between ground energy and next excited 
energy is $\Omega(1)$.} $H = H_1 + \ldots + H_m$, where each $H_i$
describes a local interaction between two neighboring particles of a
lattice. The area law conjectures that for every subset $S$ of the
particles, the entanglement entropy between $S$ and $\bar{S}$ for the
ground state of $H$ is bounded by a constant times the size of the
boundary of $S$.  The area law, which has been proven for 1-D
lattices~\cite{ref:Has07} and is conjectured for higher degree
lattices, is of central importance in condensed matter physics as it
provides the basic reason to hope that ground states of gapped local
Hamiltonians on lattices  might have a (relatively) succinct classical
description. The {\em generalized} area law (a folklore conjecture) 
transitions from this physically motivated phenomenon to a very clean and general graph theoretic formulation, where in place of edges of the lattice, the terms of the Hamiltonian correspond to edges of an arbitrary graph.  The generalized area law then states that for any subset $S$ of vertices (particles), the entanglement entropy between $S$ and $\bar{S}$ for the ground state is bounded by some constant times the cut-set of $S$ (the number of edges leaving $S$).
\end{enumerate}

We affirmatively answer both questions, based on a common technique that may be thought of as applying quantum expanders in a distributed fashion.  Indeed these two questions which at first sight seem completely unrelated turn out to be two sides of the same coin.

\begin{table}
\begin{tabular}{@{}llp{1.5in}p{3in}@{}}
\toprule
reference & form of $\ket\psi$ & communication  required & other resources \\
\midrule
\cite{BDSW96} & $n$ EPR pairs & $O(n\log(1/\eps))$ & consumes only $O(\log(1/\eps))$ EPR pairs \\
\cite{BCGST02} & $n$ EPR pairs & $(2+o(1))\log(n/\eps)$ & $O(n/\log(n/\eps))$ bits of shared randomness \\
\cite{HL07} & $n$ EPR pairs & $O(\log(1/\eps))$ & $n/\eps$ EPR pairs \\
this paper & $n$ EPR pairs & $O(\log(1/\eps))$ & \\
\bottomrule
\end{tabular}
\caption{Comparison of different entanglement-testing protocols.  When
  we say that communication $x$ is required, this means that we need
  to consume either $x$ qubits or, alternatively (by
  teleportation), $2x$ classical bits and $x$ EPR pairs.   The exception is the first row, which 
uses classical communication to verify entanglement, and hence the
entanglement cost is lower. \label{tab:ent-test}} 
\end{table}

\subsection{Main idea and results}

The main ingredient in both proofs is the notion of quantum expanders, which we discuss further in \secref{expanders}.  
A quantum expander can be thought of as a collection of $d$ unitaries
$U_1,\ldots,U_d$, (think of $d$ as a constant) each acting on a (possibly large)
dimension-$D$ Hilbert space.  For any 
matrix $X$ on the $D$-dimensional Hilbert space, the operator associated with the expander, 
$\mathcal{E}(X)= \frac{1}{d}\sum_{i=1}^d U_i X U_i^\dagger$,
has the unique eigenvalue 1 for the eigenvector $X=\bbI$ and next highest singular value $\lambda < 1$. It thus shrinks any
matrix orthogonal to the identity by a constant factor.
The key to the results in the paper is an equivalent way to view quantum expanders, by considering their action on maximally entangled states.  It is well known that for any $U$, $U\otimes U^*$ acting on the maximally entangled state $\ket{\phi_D} = \frac{1}{\sqrt D}\sum_{i=1}^D \ket i \ot \ket i$ leaves it as is.  Of course, this remains true even if $U$ is drawn uniformly at random from the set $U_1,\ldots,U_d$ of the expander.  Remarkably, even though quantum expanders use only a constant number $d$ of unitaries, they leave intact {\it only} the maximally entangled state, and cause all other states to shrink by at least a constant amount.

For the entanglement-testing problem, we use the above intuition to
derive a communication protocol which uses only a constant number of
qubits, and detects a maximally entangled state of arbitrary
dimension.  This is described in Section \ref{sec:EPR-test}. 
The idea is to determine whether Alice and Bob share a
state that is invariant under $U_i \ot U_i^*$ for $i=1,\ldots,d$, or
far from invariant; i.e. whether the shared state is $\ket{\phi_D}$ or
something orthogonal.  To achieve this with $O(1)$ communication,
suppose Alice and Bob each had access to a joint register initialized
with $1/\sqrt{d} \sum_{i=1}^d \ket{i}$.  Each could then apply controlled
operators from this register to their share of the state $\ket{\psi}$:
Alice would apply a controlled $U_i$ and Bob a controlled $U_i^*$.
This ``shared register'' model could more naturally be achieved by
having Alice create the state and perform her controlled $U_i$ before
sending the state to Bob, who then performs his controlled $U_i^*$.
Bob should then test that the control state remained intact, which happens iff the original
state of the $D$-dimensional registers was indeed the maximally
entangled state. With only a little more algebra, this shows that for
any $D$, $\epsilon > 0$, there exists a protocol which uses $O(\log
1/\epsilon)$ qubits of communication, after which Bob always accepts
if the shared state is $\ket{\phi_D}$. If the shared state is
orthogonal to $\ket{\phi_D}$, he accepts with probability at most
$\epsilon$. Moreover, if Alice and Bob do start with the maximally
entangled state $\ket{\phi_D}$, the protocol does not damage the
state.

\begin{figure}[h]
\begin{center}
\includegraphics[width=9cm]{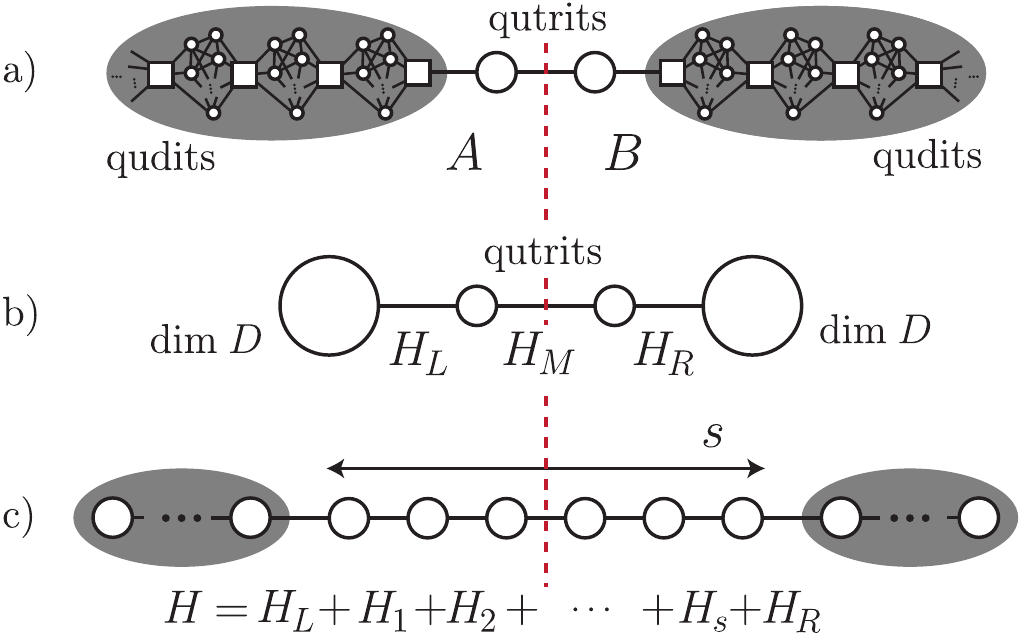}
\caption{\small a) A counterexample to the generalized area law, consisting of a chain of complete graphs separated by the middle edge. The entropy across the cut is shown in \secref{concreteH} to
grow as $\Omega\left(n^{c}\right)$, where $n$ is the total number of particles. b) A four-particle construction, analyzed in \secref{HLMR}. c) Short-chain framework for proving 1-D area law, from \cite{AradKLV12area}.}
\label{fig-intro}
\end{center}
\end{figure}

For a counterexample to the generalized area law, we use the above
intuition to exhibit a gapped local Hamiltonian acting on the graph
featured in Figure~\ref{fig-intro}\,a), for which the entanglement
entropy of the ground state across the middle cut is $\Omega (n^c)$
for some $0 < c < 1$ (rather than $O(1)$ as predicted by the
generalized area law). The core step in generating this example is the
construction of a simpler system consisting of four particles on a
line in Figure~\ref{fig-intro}\,b): two particles of dimension $d = 3$
(qutrits) in the middle, and two particles of dimension $D$ at the two
ends, with arbitrarily large $D$. The gapped Hamiltonian is of the
form $H = H_L + H_M + H_R$, where $H_L$ acts between the left particle
and the left qutrit, $H_M$ between the two qutrits, and $H_R$ between
the right qutrit and the right particle. Crucially, the entanglement
entropy of the ground state across the middle cut is $\Omega(\log D)$,
as shown in Section~\ref{sec:HLMR}.  The idea here, like in the
communication protocol, is to use the middle particles to synchronize
the application of a quantum expander on the left and right sides.
This requires only a single term of the Hamiltonian, acting on two
$d$-dimensional particles.

Enforcing a large amount of entanglement (in the ground state) by the
single constraint $H_M$ acting on a constant-dimensional system is a
surprising quantum phenomenon.  In the analogous probabilistic
situation, consider a graph whise vertices are each associated with
constant-dimensional variables, and whose edges are associated with
classical constraints. Each constraint forbids some subset of the
possible assignments to the variables at the two ends of the
edge. This describes a constraint satisfaction problem
(CSP)\footnote{This analogy between local Hamiltonians and constraint
  satisfaction problems is commonly used in quantum Hamiltonian
  complexity, see e.g., \cite{AAV13}.}.  Now consider the uniform
distribution over the set of all possible solutions to this set of
constraints, namely all assignments that violate no constraint.  It is
easy to see that the middle constraint in the graph in
Figure~\ref{fig-intro}\,a) can only enforce a convex combination of a
constant number of product distributions \footnote{This phenomena can
  be viewed as the zero temprature case; it in fact extends also to
  the Gibbs distribution at any temprature, where the two endpoints of
  a chain are always conditionally independent given the values of the
  spins in the middle.}.

This simple example of a four-particle system is already important within the context of proofs of the 1-D area law and prospects for extending those techniques to higher dimensions. 
The best current 1-D area law \cite{AradKLV12area} works 
within a model very similar to our four-body Hamiltonian, 
except the middle link in \cite{AradKLV12area} is extended into a finite chain of $s = \Omega(\log^2 d/\epsilon)$ particles, each of dimension $d$ (see figure~\ref{fig-intro}\,c). 
This yields an area law bound of  $S_{1-D} = O(\log^3(d) / \epsilon)$ across the middle cut. It was observed in \cite{AradKLV12area} that
any slight improvement in the exponent of $\log d$ would imply a non-trivial sub-volume law for 2-D systems. The crucial parameter in improving the result is the length of the middle chain;  
Our four-body Hamiltonian shows that in the extreme case of a length $1$ chain,
no area law holds. 
Understanding the intermediate regime is thus an important open question. 

Our four-particle example involves non-physical particles of arbitrarily large dimension. 
In Section~\ref{sec:concreteH} this example is converted to a counterexample 
to the generalized area law with  
bounded dimensional particles (albeit with unbounded degree of interaction). 
This is done by applying Kitaev's circuit-to-Hamiltonian construction to implement the $U_i$, 
followed by an application of the strengthening gadgets of \cite{NagajCao} 
(see  Section~\ref{sec:concreteH} for details). 

What is the connection between our two results? 
The above described Hamiltonian constructions are based on quantum expanders, just like our entanglement testers. 
In Section~\ref{sec:comm} we explore a deeper connection between very efficient communication protocols for EPR testing and violations of generalized area laws. We show that it is possible to derive a counterexample to the generalized area law, starting from a solution to the first problem (an EPR testing protocol with limited communication) and converting it using Kitaev's circuit-to-Hamiltonian construction into a Hamiltonian violating the generalized area law.  This, we believe, points at a fundamental link
between the two seemingly unrelated topics. We discuss this and many other open questions and related work 
in Section \ref{sec:discussion}.

{\em Notation:} For a matrix $X$, let $X^*$ be the entry-wise complex conjugate of $X$ and $X^\dag$ the transpose of $X^*$.  Define the Frobenius norm $|X|:=\sqrt{\tr X^\dag X}$; the operator norm $\|X\|$ is the largest singular value of $X$.

\section{Quantum Expanders}\label{sec:expanders}
The key structural component to our results are quantum expanders.  
We will only make use here of expanders based on applying one out of $d$ 
unitaries at random  
(a more general definition using Kraus operators exists).   

\begin{definition}
The operator $\mathcal{E}: L(\C ^D) \rightarrow L(\C^D)$ (here we use $L(\C^D)$ to denote the set of linear operators on $\C^D$) is termed a {\it $(D, d, \lambda)$ quantum expander} if 
\bitem
\item There are $d$ unitaries, $U_1, U_2,...,U_d$,
such that $\mathcal{E}(X) = \frac{1}{d} \sum_{i=1}^dU_i X U_i^\dagger $. 
\item Interpreted as a linear map, 
$\cE$ has second-largest singular
 value $\leq \lambda$.
\eitem
\end{definition}

Just as classical expanders may be thought of as constant-degree 
approximations to the complete graph, quantum expanders are 
constant-degree approximations to the application of unitaries drawn 
at random from the
Haar measure. 

By definition, the identity map $X=\mathbb{I}$ 
is the unique fixed point of $\mathcal{E}$.  
The second condition is equivalent to saying 
that for any $X$ with $\tr(X)=0$
\be | \mathcal{E} (X)| \leq \lambda |X|
\label{eq:frob-exp}.\ee
This interpretation suggests an alternate formulation 
where we think of each $X\in L(\C^D)$ as  a vector in $\C^D \otimes \C^D$ and the operator $\mathcal{E}$ then gets mapped to the 
operator $\hat{\mathcal{E}} = \frac{1}{d} \sum_{i=1}^dU_i \otimes U_i^* $.  Then $\hat{\mathcal{E}}$ 
fixes the maximally entangled state $\ket{\phi_D} =\frac{1}{\sqrt{D}}\sum_{x\in[D]} \ket{x}\ket{x}$, and has second largest singular value $\lambda$.  

Quantum expanders were introduced independently in
\cite{Hastings-expander1} and \cite{BST07-expander-orig} although many of the relevant ideas
were implicit in \cite{AS04}. In \cite{Hastings-expander2}, it was proved that 
taking $U_i$ for $i\in \{1,...,d\}$ 
to be Haar uniform results in a ``Ramanujan'' expander
with high probability; that is, $\lambda \approx 1/\sqrt{d}$.
Since random unitaries cannot be constructed
efficiently, other work~\cite{BST08-expander, H-cayley,
QuantumMargulisExpanders, HH09} 
gave efficient constructions, in which the unitaries can be applied 
by polynomial-size quantum circuits. 
Essentially all of these constructions achieve $\log(d) = O(\log
1/\lambda)$.
For our communication protocols, we will need $d$ to be a variable 
(since the error depends on it); 
whereas for the area law counter example, we will take $d$ to be a
small constant.  In what follows we will assume for simplicity of exposition that $d=3$ is possible,
although the smallest $d$ that has been verified is $d=8$ using
\cite{QuantumMargulisExpanders}.

Why are expanders relevant to our results? To understand the gap condition better, 
let us see why 
$\ket{\phi_D}$, the maximally entangled state on $\bbC^D \ot \bbC^D$,
 is a $+1$ eigenvector. 
Observe that for any $D\times D$ matrix $X$, 
we have $(X\otimes I)
\ket{\phi_D} =(I\otimes X^T) \ket{\phi_D}$.
Thus $\frac{1}{d}\sum_{i=1}^d (U_i \ot U_i^*)\ket{\phi_D} = 
\frac{1}{d}\sum_{i=1}^d (U_iU_i^\dag \ot I)\ket{\phi_D} = \ket{\phi_D}$.
Since the second-largest singular 
value of $\hat{\cE}$ is $\lambda$, then we have
\be \L\|\hat{\cE} - \proj{\phi_D} \R\| = \lambda \label{eq:exp-gap}.\ee

Thus, $\hat{\cE}$ gives an approximation of a projector onto $\ket{\phi_D}$
up to operator-norm error $\lambda$. 

In the rest of our paper we will explore two settings in which this
allows us to use resources proportional to $d$ (which we should think
of as small) to force a state on $\bbC^D \ot \bbC^D$ (with $D$ large)
to be close to $\ket{\phi_D}$.
\bitem
\item In \secref{EPR-test} we will show how $\log(d)$ qubits of communication
  can perform the projective measurement $\{\proj{\phi_D}, I -
  \proj{\phi_D}\}$ up to error $1/d^{\Omega(1)}$.
\item In \secref{HLMR} we will show how interactions between a pair of
  constant-dimensional 
  and a pair of two $D$-dimensional particles can have a ground state
  with maximal entanglement on the $D$-dimensional particles and a constant gap.
\eitem

\section{A communication protocol for certifying global entanglement}
\label{sec:EPR-test}

\subsection{The EPR testing problem}
As above, set $\ket{\phi_D}$ to be 
the maximally entangled state on $\bbC^D \ot \bbC^D$.  The EPR testing
problem is to determine whether a given shared state
$\ket\psi\in\bbC^D \ot \bbC^D$ is equal to or orthogonal to
$\ket{\phi_D}$.   More precisely, two parties (Alice and Bob) would
like to simulate the joint two-outcome POVM $\{\proj{\phi_D}, \bbI - \proj{\phi_D}\}$.

An $(D,\eps)$ EPR tester is a communication protocol for performing a two-outcome measurement $\{M,\bbI-M\}$ such that $\|M - \proj{\phi_D}\|\leq \eps$.   

In general EPR testers may differ in a variety of ways:
\bitem
\item If $M\geq \proj{\phi_D}$ then we say the EPR tester has {\em perfect completeness}.
\item The communication requirements and computational efficiency may vary.
\item The protocol may be performed with quantum or classical communication.  If quantum communication is used, then it is reasonable to assume that upon input $\rho$ the post-measurement state is $M^{1/2}\rho M^{1/2} / \tr[M\rho]$ or 
$(\bbI-M)^{1/2}\rho (\bbI-M)^{1/2} / \tr[(\bbI-M)\rho]$, depending on the
outcome.  If classical communication is used, we need to also consume
some entanglement.  We say that the test consumes $k$ EPR pairs if given 
an input of $n$ EPR pairs, it outputs at least $n-k$ EPR pairs (up to
$\eps$ error) when it
reports success. There are no guarantees for orthogonal input states.
\eitem

We are aware of three previous implementations of EPR testers
(previously described in \tabref{ent-test}).  Ref.~\cite{BDSW96} gave a
$\left(2^n,2^{-k}\right)$ EPR tester with perfect completeness that used a
message of $O(nk)$ classical bits and consumed $k$ EPR pairs.  This
was improved by \cite{BCGST02} to a $(2^n,\frac{2n}{k(2^k+1)})$ EPR
tester that sent $2k$ classical bits, consumed $k$ EPR pairs and used
$\approx n/k$ bits of shared randomness.   The paper \cite{HL07}
provides a protocol which uses only $\log 1/\eps$ communication qubits, but with the
assistance of an additional $n/\eps$ trusted EPR pairs.   Our
protocol achieves a protocol with this amount of communication,
without the need for any extra resources. 

\subsection{EPR testing with constant quantum communication}
\label{sec:EPRprotocol}

Our main result in this
section is an EPR tester using only a constant amount of quantum communication that is {\em
  independent} of the dimension $D$. 

\begin{theorem}\label{thm:EPR-test}
For any $D$ and any $\eps>0$ there exists a $(D,\eps)$ EPR tester with
perfect completeness using one-way communication from Alice to Bob.
The protocol has several variants:
\bitem
\item Using $(2 + o(1))\log(1/\eps)$ qubits sent from Alice to Bob, but $\poly(D)$ run-time.
\item Using $C\log(1/\eps)$ qubits sent from Alice to Bob and
  $\poly\log(D)$ run-time for some universal constant $C>0$.
\item Using either $(8 + o(1))\log(1/\eps)$ or $\approx\!
  4C\log(1/\eps)$ classical bits sent from Alice to Bob (depending on
  whether computational efficiency is needed) and consuming the same
  number of EPR pairs. 
\item Using 2 bits of shared randomness and 1 qubit of communication,
  $\poly\log(D)$ run-time and achieving $\eps=\frac{8 + \sqrt 5}{16} \leq 0.64$.
\eitem
\end{theorem}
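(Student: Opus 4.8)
The plan is to realize the POVM $\{\proj{\phi_D},\bbI-\proj{\phi_D}\}$ approximately by having Alice and Bob coherently apply the \emph{same} random generator $U_i$ of a quantum expander to their two halves, synchronized through a small control register that Alice simply hands to Bob. Fix a $(D,d,\lambda)$ expander $U_1,\dots,U_d$ with associated map $\hat{\mathcal E}=\tfrac1d\sum_i U_i\otimes U_i^*$; by \eqref{eq:exp-gap} we have $\|\hat{\mathcal E}-\proj{\phi_D}\|=\lambda$, and both $\hat{\mathcal E}$ and $\hat{\mathcal E}^\dagger$ fix $\ket{\phi_D}$. The protocol: Alice prepares a control register $C$ of dimension $d$ in the uniform superposition $\tfrac1{\sqrt d}\sum_i\ket i$; controlled on $C$, she applies $U_i$ to her half of $\ket\psi$; she sends $C$ — only $\lceil\log_2 d\rceil$ qubits — to Bob; controlled on $C$, Bob applies $U_i^*$ to his half; finally Bob measures $C$ in a basis containing $\tfrac1{\sqrt d}\sum_i\ket i$ and accepts iff he obtains that outcome.

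The first step is to compute the acceptance probability. After the two controlled operations the global state is $\tfrac1{\sqrt d}\sum_i\ket i_C\,(U_i\otimes U_i^*)\ket\psi$, so projecting $C$ onto $\tfrac1{\sqrt d}\sum_i\ket i$ leaves the data registers in the (unnormalized) vector $\hat{\mathcal E}\ket\psi$; hence $\Pr[\text{accept}]=\|\hat{\mathcal E}\ket\psi\|^2=\bra\psi M\ket\psi$ with $M=\hat{\mathcal E}^\dagger\hat{\mathcal E}$, a legitimate POVM element. Its two defining properties then follow from the spectral picture: $M$ fixes $\ket{\phi_D}$ and leaves $\ket{\phi_D}^\perp$ invariant, where it acts as a composition of two maps of norm $\le\lambda$; therefore $M\ge\proj{\phi_D}$, giving perfect completeness, and $\|M-\proj{\phi_D}\|\le\lambda^2$. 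The same computation shows the test never damages $\ket{\phi_D}$: on that input $\hat{\mathcal E}\ket{\phi_D}=\ket{\phi_D}$, so Bob accepts with certainty and the state is returned intact.

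The four variants are then obtained by choosing the expander and trading resources. Operator-norm error $\eps$ is guaranteed as soon as $\lambda^2\le\eps$. Taking the Haar-random Ramanujan expanders of \cite{Hastings-expander2} (whose $U_i$ can be implemented by $\poly(D)$-size circuits) and tracking constants gives the first variant, with $(2+o(1))\log(1/\eps)$ communicated qubits; replacing them by an explicit efficient construction such as \cite{QuantumMargulisExpanders}, for which $\log d=O(\log 1/\lambda)$ and the $U_i$ run in time $\poly\log(D)$, gives the $C\log(1/\eps)$-qubit variant. The classical-communication variant follows by teleporting the control register, which turns the $O(\log 1/\eps)$ communicated qubits into $O(\log 1/\eps)$ classical bits plus the same number of consumed EPR pairs (the precise constants come from the teleportation overhead and from rounding $d$). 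For the shared-randomness variant one shrinks the control register to a single qubit that coherently selects between $\bbI$ and $U_i$ while drawing $i$ from two bits of public randomness; a short computation then yields perfect completeness and soundness $\tfrac{8+\sqrt5}{16}$ for the appropriate small expander.

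The conceptual content is entirely in the expander gap \eqref{eq:exp-gap}: once one sees that a constant-dimensional synchronizing register lets the two parties jointly realize $\hat{\mathcal E}$, and hence (by squaring) an approximate projector onto $\ket{\phi_D}$, nothing deep remains. I expect the only genuine care needed is (i) checking that ``test whether $C$ is intact'', composed with the controlled unitaries, is exactly the measurement $\{M,\bbI-M\}$ with $M=\hat{\mathcal E}^\dagger\hat{\mathcal E}$, so that the disturbance to $\ket{\phi_D}$ is truly zero and completeness truly perfect, and (ii) pushing the constants through the teleportation and derandomization reductions to land on the four bullets. Neither of these is an obstacle so much as bookkeeping; the substantive point — that $O(1)$ communication suffices independent of $D$ — is immediate from \eqref{eq:exp-gap}.
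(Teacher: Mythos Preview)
Your core protocol and analysis are exactly the paper's: prepare a $d$-dimensional control in uniform superposition, apply controlled-$U_i$ on Alice's side, ship the control, apply controlled-$U_i^*$ on Bob's side, and test that the control is still uniform; the resulting POVM element is $M=\hat{\mathcal E}^\dagger\hat{\mathcal E}$, giving perfect completeness and $\|M-\proj{\phi_D}\|\le\lambda^2$. The first, second, and fourth bullets are obtained just as you say, by plugging in Ramanujan random expanders, iterated efficient expanders, and the Margulis construction with $\{I,U_r\}$ respectively.

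The one place where you are too quick is the third (classical-communication) variant. ``Just teleport the control register'' through EPR pairs drawn from the input does not by itself yield a valid EPR tester: on an adversarial input the designated pairs need not be EPR pairs at all, teleportation through them is then an uncontrolled channel, and the resulting operation is not a fixed POVM $\eps$-close to $\proj{\phi_D}$. The paper handles this with a small bootstrap: first run the \cite{BCGST02} tester, which needs only classical communication (and a little shared randomness that Alice can simply send), to certify $O(\log 1/\eps)$ of the input EPR pairs; then teleport the control register of the main protocol through those now-verified pairs. This bootstrap is a genuine (if modest) idea rather than bookkeeping, and it is also where the stated constants $(8+o(1))$ and $\approx 4C$ come from.
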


We remark that replacing the state $\ket{\phi_D}$ in \thmref{EPR-test}
with a general entangled state can result in a much larger (and
$D$-dependent) communication cost~\cite{HL07}.  Thus we refer to the
result as an EPR tester rather than a general entanglement tester.

One application of this result relates to the open question of whether entanglement helps
quantum communication complexity.  Classically, shared randomness 
does not significantly reduce communication
complexity because large random strings can be replaced by
pseudo-random strings that fool protocols\cite{Newman91}.  This is
called a blackbox reduction because it replaces the random input but
does not change the protocol.  Quantumly such blackbox
reductions are ruled out by efficient entanglement-testing
protocols, since they cannot be fooled by any low-entanglement state.
A similar result is in \cite{JRS08} but their construction does not
yield an EPR tester.  See also \cite{SZ08} for a non-blackbox entanglement
reduction that increases the communication cost by an exponential amount.

\begin{proof}[Proof of \thmref{EPR-test}]

The main idea is to interpret the results of \secref{expanders} as a way to 
test maximally entangled states. By \secref{expanders} it
 suffices for Alice and Bob to
implement a two-outcome measurement $\{M,\bbI-M\}$ on their shared state
with $M = \hat\cE$ for $\cE$ a $(D,k,\sqrt\eps)$ expander. 
However, it is not immediately clear how to implement this
measurement. To do this, we will use a trick that has been 
used in a variety of contexts (e.g. \cite{BBDEJM96}, \cite{HL07} and Section 2.2.2 of
\cite{money-ICS}) and can be thought of as a variant of phase estimation.
The protocol (depicted in Figure~\ref{fig:commgames}a) is as follows.
\benum
\item Alice and Bob initially share a state in registers $L,R$.
\item Alice prepares the $\log(d)$-qubit state $\frac{1}{\sqrt{d}}\sum_{i=1}^d
  \ket i$ in register $a$.
\item She performs $W=\sum_{i=1}^{d} \proj i \ot U_i$ on
$a,L$.
\item She sends system $a$ to Bob.
\item Bob performs $W^* = \sum_{i=1}^d \proj i \ot U_i^*$ on
$a,R$.
\item Bob does a two-outcome measurement on $a$, with the “accept”
  outcome corresponding to the state $\frac{1}{\sqrt{d}} \sum_{i=1}^d \ket i$
  and the ``reject'' outcome corresponding to the orthogonal subspace.
\eenum

If Alice and Bob start with the shared state $\ket\psi$, then step 5,
their state is
\be 
\frac{1}{\sqrt d}\sum_{i=1}^d \ket{i}^a \ot (U_i \ot U_i^*)
\ket{\psi}^{LR}.\ee
Step 6 then accepts with probability equal to the norm squared of $\frac{1}{d}\sum_{i=1}^d (U_i \ot U_i^*)
\ket{\psi} =: M\ket\psi$ where we have defined 
\be M = \frac{1}{d}\sum_{i=1}^d U_i \ot U_i^*.\ee
This results in the two-outcome measurement $\{M^\dag M, I - M^\dag M\}$.
By Eq.~\eq{exp-gap},
$M^\dag M$ is $\eps$ close to the desired measurement operator
$\proj{\phi_D}$.  

The communication cost is $\log(d)$.  If we do not care about
computational efficiency, we can obtain $\eps \approx 1/\sqrt{d}$
using random unitaries~\cite{Hastings-expander2}.  For a
$\poly\log(d,D)$ run-time, we can iterate an efficient expander;
e.g. applying the construction of \cite{QuantumMargulisExpanders} $k$
times yields $d=8^k$ and $\eps \leq (2\sqrt 5/8)^k$.  To use classical
bits instead, we first use the construction of \cite{BCGST02} which
uses $O(\log 1/\eps)$ classical bits to verify $O(\log 1/\eps)$ EPR
pairs. Those EPR pairs are then used to teleport the qubits in the
above protocol, which can therefore be applied to verify the rest of
the EPR pairs.

For the last construction that uses two rbits and one qubit, we start
with the quantum Margulis expander~\cite{QuantumMargulisExpanders}.
This consists of unitaries $U_1,\ldots,U_4, U_5 = U_1^\dag, \ldots,
U_8=U_4^\dag$.  The modified protocol is as follows. Let $r \in
\{1,2,3,4\}$ be the value of the shared randomness.  Run the above
protocol with the pair of unitaries $\{I, U_r\}$.  The resulting
measurement operator, conditioned on $r$, is $M_r = \frac{I + U_r \ot
  U_r^*}{2}$.  Averaging over $r$ we obtain
$$M^\dag M := \frac{1}{4}\sum_{r=1}^4 M_r^\dag M_r
= \frac{I + \frac{1}{8} \sum_{i=1}^8 U_i \ot U_i^*}{2}.$$
From the expansion properties proved in
\cite{QuantumMargulisExpanders}, we have that $\|M^\dag M  -
\proj{\phi_D}\| \leq \frac 1 2(1 + \frac{2\sqrt 5}{8}) = \frac{8 +
\sqrt 5}{16}$.
\end{proof}

To help prepare for the next sections, it is useful to view 
this test also in matrix form, as follows. 
If the initial state of the left/right registers was $\ket{x}_L\ket{y}_R$, after Alice's
operation, the state has to have the form
\begin{align}
	\frac{1}{\sqrt{d}}\sum_{i=1}^d \ket{i} (U_i \ket{x}_L) \ket{y}_R.
\end{align}
After Bob gets the ancilla and performs his operation $W^*$, the state has to have the form
\begin{align}
	\frac{1}{\sqrt{d}}\sum_{i=1}^d \ket{i} \left(U_i \ket{x}_L\right) \left(U_i^* \ket{y}_R\right).
\end{align}
Let us represent the initial state $\ket{\psi}_{AB}=\sum_{k,\ell}x_{k,\ell}\ket{k,\ell}$ by a matrix $X$, such that $X_{k,\ell}= x_{k,\ell}$.
We now rewrite the final state as a matrix with components $\beta_{(a,L), R}$:
\begin{align}
	\beta = \frac{1}{\sqrt{d}} \left[\begin{array}{c}
			U_1 X U_1^\dagger\\
			U_2 X U_2^\dagger\\		
			\vdots
 	\end{array}\right].
\end{align}
Passing the final test now means that 
\begin{align}
	U_i X U_i^\dagger = U_j X U_j^\dagger, \quad \forall i,j,
\end{align}
which is possible (if we have a quantum expander) only for $X=\ii$. 
This means the initial state was $\ket{\psi}_{LR} = \ket{\phi_D}$,
and that the final state is
$\left(\frac{1}{\sqrt{d}}\sum_{i=1}^{d} \ket{i}_a \right) \otimes \ket{\phi_D}$.

\section{A counterexample to the generalized area law}
\label{sec:HLMR}

In this Section we present our second result: a Hamiltonian with a small “bridge” term connecting two large halves of a system. Strikingly, this single-link bridge of constant dimensions has a large influence on the entanglement entropy between the two parts of the system, in the ground state.

\begin{figure}[h]
\begin{center}
\begin{tikzpicture}[scale=1.4] 
\tikzstyle{LabelStyle}=[fill=white,sloped]
\Vertex[x=0,y=0, style={minimum size=1.2cm},  LabelOut=false,L=$\Sigma_{L}$]{1} 
\Vertex[x=1.2,y=0,  LabelOut=false,L=$\sigma_{1}$]{2} 
\Vertex[x=2.4,y=0,  LabelOut=false,L=$\sigma_{2}$]{3}
\Vertex[x=3.6,y=0, style={minimum size=1.2cm}, LabelOut=false,L=$\Sigma_{R}$]{4}
\Edges[label={L}](1,2)
\Edges[label={M}](2,3)
\Edges[label={R}](3,4)
\end{tikzpicture}
\end{center}
\caption{\small A single-link chain with side operators $L$ and $R$.}
\label{fig-label}
\end{figure}
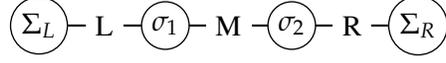

\medskip

\subsection{Background about local Hamiltonians} 
We consider Hamiltonians on
finite-size spin systems, where each term in the Hamiltonian is a
bounded-strength interaction between a bounded number of spins; in
fact our constructions will involve only pairwise interactions.
Unlike many physical systems, we do not require spatial locality but
allow interactions between any pair of spins.  We will also consider
systems in which individual spin dimension $d_i$ can be large.

A quantum state on $n$ particles, of dimensions $d_1,..,d_n$ respectively,
is a unit vector in $\cH := \bbC^{d_1} \ot \dots \ot \bbC^{d_n}$.  A
Hamiltonian $H$ is a Hermitian matrix acting on $\cH$.  A $k$-local
Hamiltonian can be written as $H = \sum_{i=1}^m H_i$ where each $H_i$
acts nontrivially on at most $k$ particles.
Conventionally,
each $\|H_i\| \leq 1$.  If the $H_i$'s are all diagonal, $H$ is
equivalent to a classical constraint satisfaction problem; in
general the $H_i$ may not always be diagonal or even commute.  The
eigenvector of $H$ with the smallest eigenvalue is called the ground state. 
We say
that a Hamiltonian is {\em frustration free} if the ground state of $H$ is
also the eigenvector with lowest eigenvalue for each $H_i$; otherwise
we call $H$ frustrated.

If the eigenvalues of $\cH$ are $E_0 \leq E_1
\leq \cdots \leq  E_{D-1}$ 
then the {\em gap} of
$H$ is defined to be $E_1 - E_0$.  When $H$ belongs to a family of
Hamiltonians indexed by $n$, we say this family is {\em gapped} if the gap
of each $H$ is lower-bounded by a constant independent of $n$.
(Otherwise the family is said to be {\em gapless}.)  Often we identify
$H$ with the family of Hamiltonians, and simply say that $H$ itself is gapped
or gapless.

\subsection{Construction of the Hamiltonian} 
\label{sec:Hconstruct}

Let the system $W$ consist of two qutrits ($\sigma_{1}$ and $\sigma_{2}$) and two high dimensional systems
($\Sigma_{L}$ and $\Sigma_{R}$):
\[
W = \Sigma_{L}\otimes \sigma_{1} \otimes \sigma_{2} \otimes \Sigma_{R} = C^{D}\otimes C^{3} \otimes C^{3} \otimes C^{D}. 
\]

We design a gapped Hamiltonian $H = H_L+ H_M+ H_R$, where  $H_L$ (left) $H_M$ (middle) and $H_R$ (right) 
are projectors acting on $\Sigma_{L}\otimes \sigma_{1} $, $\sigma_{1} \otimes \sigma_{2}$ and $\sigma_{2} \otimes \Sigma_{R}$, 
respectively, that defies the area law
through the cut $\Sigma_{L}\otimes \sigma_{1} \, \mid\, \sigma_{2} \otimes \Sigma_{R}$. For convenience we write all elements
of $W$ in the form
\[
\sum_{i,j\in[3]} |\psi_{i,j}\rangle|i\rangle |j\rangle,
\]
where $\psi_{i,j} \in \Sigma_{L} \otimes \Sigma_{R}$, and $\sum_{i,j\in[3]} |\psi_{i,j}|^2 = 1$. If we fix a basis in $\Sigma_{L}$ and $\Sigma_{R}$, respectively,
we can think of $\psi_{i,j}$ for every $i,j\in [3]$ as $D\times D$ matrices. Our construction will rely on
quantum expanders using $D\times D$ unitary matrices with $U_1=I$, and $U_{2}$ and
$U_{3}$ such that for any $D\times D$ matrix $X$ with  
$|X|^2 = \tr(XX^\dag) = 1$, $\tr(X)=0$ we have:
\begin{eqnarray}\label{eq:qexp}
|\cE(X)| = 
\frac 1 3 |X + U_{2}XU_{2}^\dag +  U_{3}XU_{3}^\dag| \le (1-c),
\end{eqnarray}
where $c:=1-\lambda>0$ is a fixed constant, independent of $D$. 
Equation (\ref{eq:qexp}) and the triangle inequality imply that
for any $D\times D$ matrix $X$ with 
$|X|= 1$,  $\tr(X)=0$:
\begin{eqnarray}\label{eq:qexp2}
|U_{2}XU_{2}^\dag -X| +  |U_{3}XU_{3}^\dag -X| \ge 3c
\end{eqnarray}

We now define projectors $H_L$, $H_R$ and $H_M$ via their zero subspaces 
${\cal L}$, ${\cal R}$ and ${\cal M}$. We describe these subspaces by writing states of $W$ in the block matrix form
\[
\left(\begin{array}{lll}
\psi_{1,1} & \psi_{1,2} & \psi_{1,3} \\
\psi_{2,1} & \psi_{2,2} & \psi_{2,3} \\
\psi_{3,1} & \psi_{3,2} & \psi_{3,3} \\
\end{array}\right).
\]
Note that our way to present a (pure) state of $W$ is unlike the density matrix presentation, and it is only meaningful, because $W$ is a tensor product
of four components. The above matrix form (of a vector) is 
simply a convenient way of rendering the $(3 D)^2$ coordinates of a state in $W$. In this presentation ${\cal L}$, ${\cal R}$ and ${\cal M}$
have convenient expressions.
${\cal L}$ is the set of states of the form
\[
\left(\begin{array}{rrr}
\psi_{1,1} & \psi_{1,2} & \psi_{1,3} \\
U_{2}\psi_{1,1} & U_{2}\psi_{1,2} & U_{2}\psi_{1,3} \\
U_{3}\psi_{1,1} & U_{3}\psi_{1,2} & U_{3}\psi_{1,3} \\
\end{array}\right),
\]
where $\psi_{1,1}$, $\psi_{1,2}$ and  $\psi_{1,3}$ are arbitrary.
${\cal R}$ is the set of states of the form
\[
\left(\begin{array}{rrr}
\psi_{1,1} & \psi_{1,1}U_{2} & \psi_{1,1}U_{3} \\
\psi_{2,1} & \psi_{2,1}U_{2} & \psi_{2,1}U_{3} \\
\psi_{3,1} & \psi_{3,1}U_{2} & \psi_{3,1}U_{3} \\
\end{array}\right),
\]
where $\psi_{1,1}$, $\psi_{2,1}$ and $\psi_{3,1}$ are arbitrary.
${\cal M}$ is the set of states of the form
\[
\left(\begin{array}{lll}
\psi_{1,1} & X & Y \\
X & \psi_{2,2} & \psi_{2,3} \\
Y & \psi_{3,2} & \psi_{3,3} \\
\end{array}\right),  \hspace{0.5in}  
\]
where $X$, $Y$ and the remaining $\psi_{i,j}$'s are arbitrary.
It is easy to check that $H_L, H_M, H_R$ are indeed local.  
For instance 
${\cal M}$ is a tensor product of $\Sigma_{L} \otimes \Sigma_{R}$ with the 
subspace $S$ of $\sigma_{1}\otimes \sigma_{2}$ that equates the coefficients of
$|1\rangle |2\rangle$ and $|2\rangle |1\rangle$ and also the
coefficients of $|1\rangle |3\rangle$ and $|3\rangle |1\rangle$.
Explicitly
\begsub{H-terms}
H_L &:= \bbI - \frac 1 3 \sum_{i,i' = 1}^3 U_i U_{i'}^\dag \ot \ket i \bra i', \\
H_R &:= \bbI - \frac 1 3 \sum_{j,j' = 1}^3 
	 \ket j \bra j' \ot \left(U_j U_{j'}^\dag\right)^T,  \\
H_M &:= \frac{(\ket{12}-\ket{21})(\bra{12}-\bra{21})}{2} + 
	\frac{(\ket{13}-\ket{31})(\bra{13}-\bra{31})}{2}.\label{HM}
\endsub

\subsection{The ground state is highly entangled} 
\label{sec:entangledG}

\begin{lemma}\label{lem:ground1}
The unique
normalized ground state $\ket{G}$ of $H = H_L + H_R + H_M = (\bbI -
\Pi_{\cal L}) +  (\bbI - \Pi_{\cal R}) +  (\bbI - \Pi_{\cal M})$ written out
as a matrix is 
\[
G = \frac{1}{3\sqrt{D}}  \left(\begin{array}{lll}
I & U_{2} & U_{3} \\
U_{2} & U_{2}^{2} & U_{2}U_{3} \\
U_{3} & U_{3}U_{2} & U_{3}^{2} \\
\end{array}\right).
\]
It satisfies $H\ket{G} = H_L\ket{G} = H_M\ket{G} = H_R\ket{G} =0$.
\end{lemma}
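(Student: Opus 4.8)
The plan is to establish three facts: (i) $G$ lies in each of the three zero-subspaces $\mathcal L$, $\mathcal M$, $\mathcal R$, so that $H_L\ket G=H_M\ket G=H_R\ket G=0$ and hence $H\ket G=0$; (ii) the matrix $G$ as written is already normalized, $|G|=1$; and (iii) $\mathcal L\cap\mathcal M\cap\mathcal R$ is one-dimensional. Granting these, observe that $H=H_L+H_M+H_R$ is a sum of orthogonal projectors ($H_L=\bbI-\Pi_{\mathcal L}$, and likewise for $M,R$), hence positive semidefinite; so by (i) $\ket G$ is a ground state with ground energy $0$, and by (iii) it is the only one up to a phase.

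For (i) I would just match the block structure of $G$ against the defining forms of the three subspaces given above. Write $G=\tfrac{1}{3\sqrt D}G'$, where $G'$ has first block-row $(I,U_2,U_3)$. Left-multiplying that row by $U_2$ and by $U_3$ (using $U_1=I$) reproduces exactly the second and third block-rows of $G'$, so $G\in\mathcal L$; symmetrically, right-multiplying the first block-column $(I,U_2,U_3)^{\mathsf{T}}$ by $U_2$ and $U_3$ gives the other two block-columns, so $G\in\mathcal R$; and the $(1,2)$ and $(2,1)$ blocks of $G'$ both equal $U_2$ while the $(1,3)$ and $(3,1)$ blocks both equal $U_3$, so $G\in\mathcal M$. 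For (ii), each of the nine blocks $U_iU_j$ is unitary and hence has squared Frobenius norm $D$, so $|G|^2=\tfrac{1}{(3\sqrt D)^2}\cdot 9D=1$.

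The substance of the argument is (iii). Let $\Psi=(\psi_{i,j})\in\mathcal L\cap\mathcal R$. Using $U_1=I$, membership in $\mathcal L$ says $\psi_{i,j}=U_i\psi_{1,j}$ for all $i$, and membership in $\mathcal R$ says $\psi_{i,j}=\psi_{i,1}U_j$ for all $j$; combining, $\psi_{i,j}=U_iAU_j$ with $A:=\psi_{1,1}$ an arbitrary $D\times D$ matrix (so $\mathcal L\cap\mathcal R$ is precisely this $D^2$-dimensional family). Now impose $\Psi\in\mathcal M$: equality of the $(1,2)$ and $(2,1)$ blocks reads $AU_2=U_2A$, and equality of the $(1,3)$ and $(3,1)$ blocks reads $AU_3=U_3A$, so $A$ commutes with $U_2$ and $U_3$ (and trivially with $U_1=I$). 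Hence $U_2AU_2^\dagger=U_3AU_3^\dagger=A$, and therefore $\mathcal E(A)=\tfrac13\bigl(A+U_2AU_2^\dagger+U_3AU_3^\dagger\bigr)=A$. Decompose $A=\tfrac{\tr A}{D}I+A_0$ with $\tr A_0=0$; since $\mathcal E(I)=I$ this gives $\mathcal E(A_0)=A_0$, and the expander estimate~\eqref{eq:qexp} then yields $|A_0|=|\mathcal E(A_0)|\le(1-c)|A_0|$, forcing $A_0=0$. So $A$ is a scalar multiple of $I$, hence $\Psi$ is a scalar multiple of $G'$, and $\mathcal L\cap\mathcal M\cap\mathcal R=\mathbb C\,G'$ is one-dimensional.

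To finish the uniqueness claim: any ground state $\ket\Psi$ satisfies $\langle\Psi|H_L|\Psi\rangle+\langle\Psi|H_M|\Psi\rangle+\langle\Psi|H_R|\Psi\rangle=0$ with each summand nonnegative, so each vanishes; since $H_L,H_M,H_R$ are orthogonal projectors this means $\ket\Psi\in\mathcal L\cap\mathcal M\cap\mathcal R$, which by (iii) pins $\ket\Psi$ down up to a phase. The only step above that is more than block-matrix bookkeeping is the invocation of the quantum expander's spectral gap~\eqref{eq:qexp}, which is precisely what collapses ``$A$ commutes with every $U_i$'' into ``$A$ is a scalar''; I expect that to be the conceptual crux, with the rest routine.
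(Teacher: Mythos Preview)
Your proof is correct and follows essentially the same approach as the paper: both reduce uniqueness to showing that the only matrix commuting with $U_2$ and $U_3$ is a scalar multiple of $I$, which is exactly the expander property. The paper's proof is terser and cites \eqref{eq:qexp2} for that step, whereas you unpack the same conclusion directly from \eqref{eq:qexp}; otherwise the arguments are the same, with your version simply filling in the block-matrix bookkeeping and normalization check that the paper leaves implicit.
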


\begin{proof}
Equation (\ref{eq:qexp2}) guarantees that $I$ is the only $D\times D$ matrix that commutes with 
both $U_{2}$ and $U_{3}$. From this together with the above forms of 
 ${\cal L}, {\cal R}$ and  ${\cal M}$, 
 it follows that $\ket{G}$
is the only normalized state vector in ${\cal L} \cap {\cal R} \cap {\cal M}$.
\end{proof}

\begin{lemma}
The entanglement entropy of $\ket{G}$
along the $\Sigma_{L}\otimes \sigma_{1} \, \mid\, \sigma_{2} \otimes \Sigma_{R}$ cut
is
 $\log_{2} D$.
\label{lem:entangledstate}
\end{lemma}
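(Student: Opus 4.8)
The plan is to compute the reduced density matrix of $\ket{G}$ on the subsystem $\Sigma_L \otimes \sigma_1$ and show it has exactly $D$ equal nonzero eigenvalues, each equal to $1/D$, which immediately gives entanglement entropy $\log_2 D$. I would start by reading off the Schmidt-type decomposition directly from the block-matrix form of $G$. Writing the cut as $(\Sigma_L \otimes \sigma_1) \mid (\sigma_2 \otimes \Sigma_R)$, the row index of the $3D \times 3D$ matrix $G$ is labeled by $(i, a)$ with $i \in [3]$ the $\sigma_1$ index and $a \in [D]$ the $\Sigma_L$ index, and the column index is labeled by $(j, b)$ with $j \in [3]$ the $\sigma_2$ index and $b \in [D]$ the $\Sigma_R$ index. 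The entry of $G$ at position $((i,a),(j,b))$ is $\tfrac{1}{3\sqrt D}(U_i U_j)_{a,b}$, where I set $U_1 = I$.

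The key step is then to show that the matrix $G G^\dagger$ (the reduced density matrix on $\Sigma_L \otimes \sigma_1$, since $G$ is already the matrix of a pure state across this cut) is $\tfrac{1}{D}$ times a projector of rank $D$. Concretely, $(G G^\dagger)_{(i,a),(i',a')} = \tfrac{1}{9D}\sum_{j,b} (U_i U_j)_{a,b} \overline{(U_{i'} U_j)_{a',b}} = \tfrac{1}{9D}\sum_j (U_i U_j U_j^\dagger U_{i'}^\dagger)_{a,a'} = \tfrac{1}{9D}\sum_j (U_i U_{i'}^\dagger)_{a,a'} = \tfrac{1}{3D}(U_i U_{i'}^\dagger)_{a,a'}$, using unitarity of each $U_j$. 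So $G G^\dagger = \tfrac{1}{3D}\, \Pi$ where $\Pi$ is the $3D \times 3D$ matrix with blocks $\Pi_{i,i'} = U_i U_{i'}^\dagger$. One checks $\Pi^2 = 3\Pi$ (since $(\Pi^2)_{i,i'} = \sum_k U_i U_k^\dagger U_k U_{i'}^\dagger = 3 U_i U_{i'}^\dagger$), so $\tfrac{1}{3}\Pi$ is a projector, and its rank equals its trace $\tfrac{1}{3}\cdot 3D = D$. Hence $G G^\dagger$ has exactly $D$ nonzero eigenvalues, all equal to $\tfrac{1}{3D}\cdot 3 = \tfrac{1}{D}$ — wait, more carefully: the nonzero eigenvalues of $\tfrac{1}{3D}\Pi$ are $\tfrac{1}{3D}$ times the nonzero eigenvalues of $\Pi$, which are all $3$, giving eigenvalue $\tfrac{1}{D}$ with multiplicity $D$. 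This is a flat spectrum on $D$ values, so the entanglement entropy is $-\sum \tfrac{1}{D}\log_2 \tfrac{1}{D} = \log_2 D$.

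I would also remark that since $\ket{G}$ is normalized, $\tr(G G^\dagger) = 1$, which is consistent with $\tfrac{1}{3D}\cdot 3D = 1$ and serves as a sanity check; alternatively one can observe that the computation above never used the expander property at all — any choice of unitaries $U_2, U_3$ (with $U_1 = I$) gives the same maximally mixed reduced state across this cut, so the content of the lemma is purely the structure of $G$, with the expander property only needed (in the previous lemma) to ensure $\ket{G}$ is the \emph{unique} ground state.

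The main obstacle is essentially bookkeeping: being careful about which tensor factors lie on which side of the cut, and tracking the index $j$ that appears in $U_i U_j$ — it is the $\sigma_2$ index, which lives on the \emph{right} side of the cut, so it is the one being summed over when forming $G G^\dagger$, and the cancellation $U_j U_j^\dagger = I$ is exactly what produces the flat spectrum. Once the indices are set up correctly the computation is a two-line matrix manipulation, so I expect no real difficulty beyond writing it cleanly.
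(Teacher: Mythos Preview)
Your proposal is correct and follows essentially the same approach as the paper: both compute the reduced density matrix as $GG^\dagger$, obtain the block form $\frac{1}{3D}\sum_{i,i'} U_i U_{i'}^\dagger \otimes \ket{i}\bra{i'}$, and show this is $\tfrac{1}{D}$ times a rank-$D$ projector. The only cosmetic difference is in that last step: the paper diagonalizes explicitly by conjugating with $W=\sum_i U_i\otimes\proj{i}$, whereas you verify $\Pi^2=3\Pi$ directly; these are equivalent one-line computations, and your added remark that the expander property is nowhere used is correct and worth keeping.
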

\begin{proof}
Let $\ket{Z}$ be any state on our-four particle system, and let $Z$ be its matrix notation. 
By a direct calculation, the reduced density matrix of $\ket{Z}$ on the $\Sigma_L \ot \sigma_1$ systems is exactly   
$ZZ^\dag$. Since $G = \frac{1}{3\sqrt{D}} \sum_{i,j=1}^3 U_iU_j \ot
\ket i \bra{j}$ (letting $U_1 := \bbI$), we have that for the ground state $\ket{G}$ the reduced density matrix 
\be 
	GG^\dag = \frac{1}{3D} \sum_{i,i'=1}^3 U_iU_{i'}^\dag \ot \ket i\bra{i'} .
\ee
To diagonalize $GG^\dag$, let $W := \sum_{i=1}^3 U_i \ot \proj i$.
Then
\be 
	W^\dag (GG^\dag) W = \frac{\ii}{D} \ot \frac 1 3 \sum_{i,i'=1}^3 \ket i \bra{i'},
\ee
which has $D$ eigenvalues equal to $1/D$.
\end{proof}

\subsection{The Hamiltonian is gapped} 
\label{sec:Hgapped}

\begin{lemma}
\label{lem:gap1}
Denote the energy gap above the ground space for the Hamiltonian
$H=H_L+H_M+H_R$ by $\Delta$. Then $\Delta\geq c/4$ with $c$ defined in \eq{qexp}.
\end{lemma}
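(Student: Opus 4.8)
The plan is to combine the frustration-freeness from Lemma~\ref{lem:ground1} with the observation that $H_L$ and $H_R$ commute, so that the whole estimate collapses to an expansion bound for $H_M$ on a single $D^2$-dimensional subspace. Since the ground space is the line $\C\ket G$ and $H=H_L+H_M+H_R$ is a sum of three projectors, it suffices to show $\langle\psi|H|\psi\rangle\ge c/4$ for every unit vector $\ket\psi\perp\ket G$. First I would note that $\Pi_{\cal L}$ acts only on $\Sigma_L\otimes\sigma_1$ (it equals $\tfrac13 VV^\dagger$ for the isometry $V=\tfrac1{\sqrt3}\sum_i U_i\otimes\ket i$) while $\Pi_{\cal R}$ acts only on $\sigma_2\otimes\Sigma_R$, so $[H_L,H_R]=0$; being commuting projectors, $H_L+H_R=2\bbI-\Pi_{\cal L}-\Pi_{\cal R}$ has spectrum in $\{0,1,2\}$, hence $H_L+H_R\succeq\bbI-\Pi_{\cal L}\Pi_{\cal R}=\bbI-\Pi_{{\cal L}\cap{\cal R}}$. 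Intersecting the two block descriptions of $\cal L$ and $\cal R$ shows that ${\cal L}\cap{\cal R}$ consists exactly of the states whose matrix is $(U_iAU_j)_{i,j=1}^3$ for an arbitrary $A\in\lin{\C^D}$ (with $U_1=\bbI$), a $D^2$-dimensional space in which $\ket G$ is the point $A=\tfrac1{3\sqrt D}\bbI$.

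The core of the argument is then to extract the gap from $H_M$ inside ${\cal L}\cap{\cal R}$. Split $\ket\psi=\ket{\psi_0}+\ket{\psi_1}$ with $\ket{\psi_0}=\Pi_{{\cal L}\cap{\cal R}}\ket\psi$ and $\ket{\psi_1}\perp{\cal L}\cap{\cal R}$, and set $s=\norm{\psi_1}$, so $\norm{\psi_0}^2=1-s^2$; since $\ket G\in{\cal L}\cap{\cal R}$ one also gets $\ket{\psi_0}\perp\ket G$. The previous step gives $\langle\psi|H_L+H_R|\psi\rangle\ge s^2$, and since $H_M$ is a projector, $\langle\psi|H_M|\psi\rangle=\norm{H_M\psi}^2\ge\bigl(\norm{H_M\psi_0}-s\bigr)_+^2$. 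To lower-bound $\norm{H_M\psi_0}$, substitute $\psi_{ij}=U_iAU_j$ into $H_M=\tfrac12(\ket{12}-\ket{21})(\bra{12}-\bra{21})+\tfrac12(\ket{13}-\ket{31})(\bra{13}-\bra{31})$; using unitary invariance of the Frobenius norm to replace $AU_j-U_jA$ by $(U_jAU_j^\dagger-A)U_j$, this yields $\norm{H_M\psi_0}^2=\tfrac12|U_2AU_2^\dagger-A|^2+\tfrac12|U_3AU_3^\dagger-A|^2$, where $\tr A=0$ because $\ket{\psi_0}\perp\ket G$. Feeding $A/|A|$ into inequality~\eqref{eq:qexp2}, together with a power-mean step and $\norm{\psi_0}^2=9|A|^2$, bounds this from below by $\tfrac{c^2}{4}\norm{\psi_0}^2$, i.e. $\norm{H_M\psi_0}\ge\tfrac c2\sqrt{1-s^2}$.

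Combining the two bounds gives $\langle\psi|H|\psi\rangle\ge s^2+\bigl(\tfrac c2\sqrt{1-s^2}-s\bigr)_+^2$, and minimizing the right-hand side over $s\in[0,1]$ yields a positive lower bound on $\Delta$ depending only on $c$. I expect the main obstacle to be precisely this last trade-off: the leakage $\ket{\psi_1}$ helps (it costs energy $\ge s^2$ through $H_L+H_R$) but also hurts (it can partially cancel the $H_M$-penalty carried by $\ket{\psi_0}$), so turning the displayed inequality into the clean constant $c/4$ — rather than a bare $\Delta=\Omega(1)$ — is where the careful bookkeeping (the exact use of \eqref{eq:qexp2} and the optimization in $s$) has to be done, and it is this accounting that pins down the constant in the statement. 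A secondary point worth checking is the exact description of ${\cal L}\cap{\cal R}$ as $\{(U_iAU_j)_{ij}\}$, which genuinely uses the block forms of both $\cal L$ and $\cal R$ simultaneously rather than either one alone.
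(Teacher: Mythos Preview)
Your approach mirrors the paper's: both use that $H_L,H_R$ act on disjoint factors (hence commute) to get $H_L+H_R\geq\bbI-\Pi_{\mathcal L\cap\mathcal R}$, identify $\mathcal L\cap\mathcal R$ with the block matrices $(U_iAU_j)_{i,j}$, and then lower-bound $\langle\psi_0|H_M|\psi_0\rangle$ on that space (orthogonal to $\ket G$, so $\tr A=0$) via~\eqref{eq:qexp2}. The one substantive difference is precisely the ``trade-off in $s$'' you flag at the end. The paper sidesteps your direct optimization by packaging the two projectors $\tilde P_{LR}:=\Pi_{\mathcal L\cap\mathcal R}-\proj G$ and $H_M$ into \lemref{two-proj}, a Jordan-block computation yielding the sharp identity $\lambda_{\min}(\bbI-\tilde P_{LR}+H_M)=1-\sqrt{1-\mu}\ge\mu/2$, where $\mu:=\min_{\psi_0\in\tilde V_{LR}}\langle\psi_0|H_M|\psi_0\rangle$. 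Your elementary route (triangle inequality on $\|H_M\psi\|$, then minimize $s^2+(\sqrt\mu\sqrt{1-s^2}-s)_+^2$ over $s$) is also valid and, carried out, gives $1+\tfrac\mu2-\tfrac12\sqrt{4+\mu^2}$, which lies only second-order below $\mu/2$; so nothing essential is lost, the two-projector lemma is simply the cleaner tool and removes the need for the optimization you anticipated. Regarding the constant: your bound $\|H_M\psi_0\|^2\ge\tfrac{c^2}{4}\|\psi_0\|^2$, i.e.\ $\mu\ge c^2/4$, is correct; the displayed expression in the paper's~\eqref{eq:HM-exp-high} omits the squares on the Frobenius norms and so overstates $\mu\ge c/2$, meaning the literal $c/4$ in the statement is not what either argument actually delivers---both yield $\Delta\ge c^2/8$ up to lower-order terms.
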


First we state a Lemma about the spectrum of the sum of two projectors.
\begin{lemma}\label{lem:two-proj}
Let $P_1,P_2$ be projectors onto subspaces $V_1, V_2$.  Let 
\be \mu := 
\min\{\bra{\psi}P_2\ket{\psi}  : \ket{\psi}\in V_1,
\braket{\psi}{\psi}=1\}
= \lambda_{\min}(P_1 P_2 P_1).\ee
Then the minimum eigenvalue of $\bbI-P_1 + P_2$ is $1-\sqrt{1-\mu}\geq
\mu/2$.
\end{lemma}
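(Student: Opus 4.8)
The plan is to prove the two halves of the statement separately for the operator $A := \bbI - P_1 + P_2$: that every eigenvalue of $A$ is at least $1 - \sqrt{1-\mu}$, and that this value is actually attained. The concluding inequality $1 - \sqrt{1-\mu} \ge \mu/2$ is then just $\sqrt{1-\mu} \le 1 - \mu/2$, which holds for $\mu \in [0,1]$ by squaring both sides. Throughout I would use that $A = (\bbI - P_1) + P_2$ is a sum of positive semidefinite operators, so every eigenvalue $\nu$ of $A$ satisfies $\nu \ge 0$, and that $\mu = \min\{\bra a P_2 \ket a : \ket a \in V_1,\ \braket a a = 1\}$ while $P_1 P_2 P_1$ is positive semidefinite with range contained in $V_1$.

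For the lower bound, take a unit eigenvector $\ket\psi$ of $A$ with eigenvalue $\nu$. If $\nu \ge 1$ there is nothing to show, so assume $0 \le \nu < 1$. Rewriting the eigenvalue equation gives $P_2 \ket\psi = (\nu - 1)\ket\psi + P_1\ket\psi$. Were $P_1\ket\psi = 0$, this would make $\nu - 1$ an eigenvalue of the projector $P_2$, forcing $\nu \in \{1, 2\}$, a contradiction; hence $\ket a := P_1\ket\psi \ne 0$. Now applying $P_2$ to the rewritten equation yields $(2-\nu) P_2\ket\psi = P_2 P_1\ket\psi$, and applying $P_1$ yields $P_1 P_2 \ket\psi = \nu P_1\ket\psi$; composing the two (apply $P_1$ to the first and substitute the second) gives $P_1 P_2 P_1 \ket a = \nu(2-\nu)\ket a$. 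Since $\ket a \in V_1$ and $\bra a P_2 \ket a = \bra a P_1 P_2 P_1 \ket a = \nu(2-\nu)\braket a a$, the variational definition of $\mu$ forces $\nu(2-\nu) \ge \mu$, i.e. $(1-\nu)^2 \le 1 - \mu$, i.e. $\nu \ge 1 - \sqrt{1-\mu}$.

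For attainment I would first dispose of the degenerate case $\mu = 1$: then $V_1 \subseteq V_2$, so any unit $\ket a \in V_1$ has $\bra a A \ket a = 1 - 1 + 1 = 1 = 1 - \sqrt{1-\mu}$. If $\mu < 1$, choose (by compactness) a unit $\ket a \in V_1$ minimizing $\bra a P_2 \ket a$, so $P_1 P_2 P_1 \ket a = \mu \ket a$ and $\|P_2\ket a\|^2 = \mu$; put $\nu := 1 - \sqrt{1-\mu} < 1$, so that $\mu = \nu(2-\nu)$, and set
\[ \ket\psi := \ket a - \frac{1}{2-\nu}\, P_2\ket a . \]
A short computation using $P_1\ket a = \ket a$, $P_1 P_2 P_1 \ket a = \mu\ket a$ and $P_2^2 = P_2$ gives $P_1\ket\psi = (1-\nu)\ket a$ and $P_2\ket\psi = \frac{1-\nu}{2-\nu} P_2\ket a$, hence $A\ket\psi = \nu\ket\psi$, while $\braket\psi\psi = 2(1-\nu)^2/(2-\nu) > 0$ shows $\ket\psi$ is a genuine eigenvector. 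Together with the lower bound this gives $\lambda_{\min}(A) = 1 - \sqrt{1-\mu}$.

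I do not expect a real obstacle here — the content is elementary linear algebra — but the step that needs care is the handling of the degenerate cases ($\nu \in \{1,2\}$ in the lower bound, and $\mu = 1$ in attainment, where the candidate eigenvector collapses) and keeping the operator manipulations in the right order. An alternative and perhaps more transparent route is to invoke Jordan's lemma to simultaneously block-diagonalize $P_1$ and $P_2$ into blocks of dimension at most $2$, note that on a two-dimensional block one may take $P_1$ to be the projector onto $\ket 0$ and $P_2$ the projector onto $\cos\theta \ket 0 + \sin\theta\ket 1$, check directly that $A$ then has block-eigenvalues $1 \pm \sin\theta$ while the relevant value of $P_1 P_2 P_1$ on $V_1$ is $\cos^2\theta$, and minimize over all blocks; the same final inequality then closes the argument.
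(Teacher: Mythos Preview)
Your proof is correct. The paper takes precisely the alternative route you sketch at the end: it invokes Jordan's Lemma to reduce to a single $2\times 2$ block with $P_1 = \proj 0$ and $P_2$ the rank-one projector with top-left entry $\mu$, then reads off the eigenvalues $1\pm\sqrt{1-\mu}$ of $\bbI - P_1 + P_2$ directly. Your primary argument is genuinely different: you work with an arbitrary eigenvector of $A$, derive the intertwining relation $P_1P_2P_1\ket a = \nu(2-\nu)\ket a$ by elementary manipulations, and then explicitly construct an eigenvector attaining $1-\sqrt{1-\mu}$. This is longer but entirely self-contained, avoiding any appeal to the simultaneous block-diagonalization of two projectors; conversely, the paper's proof is a two-line computation once Jordan's Lemma is granted. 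Both establish the exact equality $\lambda_{\min}(A)=1-\sqrt{1-\mu}$ (not merely the inequality), and your handling of the degenerate cases $\nu\ge 1$ and $\mu=1$ is sound.
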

\begin{proof}
By Jordan's Lemma~\cite{Jordan75}, it suffices to consider the case
when
\be P_1 = \bpm 1 & 0 \\ 0 & 0 \epm
\qquad\text{and}\qquad
P_2 = \bpm \mu & \sqrt{\mu(1-\mu)} \\
\sqrt{\mu(1-\mu)} & 1-\mu\epm\ee
In this case $P_1 + P_2 = I - (1-\mu)\sigma_z + \sqrt{\mu(1-\mu)}\sigma_x$
which has eigenvalues $1 \pm \sqrt{1-\mu}$.
\end{proof}

\begin{proof}[Proof of \lemref{gap1}]
Let $V_{LR}$ be the ground space of $H_L + H_R$ and let $\tilde V_{LR}$ be the
subspace of $V_{LR}$ that is orthogonal to $\ket{G}$.  Let
$P_{LR}, \tilde P_{LR}$ be the corresponding projectors and observe
that
\be H_L + H_R \geq \bbI - P_{LR}
\qand
H_L + H_R + \proj G \geq \bbI - \tilde P_{LR}.
\ee
Let $\ket\psi \in
\tilde V_{LR}$ be a unit vector.  Since $\ket\psi\in V_{LR}$, we can write
it in matrix form as 
\begin{align}
	\ket{\psi} = \left(\begin{array}{rrr}
	X & XU_{2} & XU_{3} \\
	U_{2}X & U_{2}XU_{2} & U_{2}XU_{3} \\
	U_{3}X & U_{3}XU_{2} & U_{3}XU_{3} \\
	\end{array}\right).  \label{eq:AXBform}
\end{align}
Since $\braket{G}{\psi}=0$ we additionally have $\tr[X]=0$.
From normalization we have $|X| = 1/3$.  Now we calculate
\be
\mu := \bra{\psi}H_M \ket{\psi} = 
\frac{|XU_2 - U_2X|  +|XU_3 - U_3X|}{2}
\stackrel{\eq{qexp2}}{\geq} \frac{|X|}{2} 3c
= \frac{c}{2}
\label{eq:HM-exp-high}\ee
Setting $P_1=\tilde P_{LR}$ and $P_2 = H_M$ we can now apply
\lemref{two-proj} and find that the minimum eigenvalue of 
$(\bbI - \tilde P_{LR}) + H_M$ is $\geq c/4$.
Finally, the second-smallest eigenvalue of $H_L+H_M+H_R$ is the
minimum of $\bra{\psi}H_L+H_M+H_R\ket{\psi}$ over all unit vectors
$\ket\psi$ satisfying $\braket{\psi}{G}=0$.  For such a vector we have
\be 
\bra{\psi}H_L+H_M+H_R\ket{\psi}
= \bra{\psi}(H_L+H_M+H_R+\proj G)\ket{\psi}
\geq \bra{\psi}\bbI - \tilde P_{LR} + H_M\ket\psi
\geq \frac c 4 \ee
Combined with \lemref{ground1} this shows that the gap is $\geq c/4$.
\end{proof}

\section{The abstract Hamiltonian can be implemented locally}
\label{sec:concreteH}

The Hamiltonian construction in Section~\ref{sec:HLMR} has very
interesting properties (a unique, very entangled ground state and a
constant gap), but the $H_L$ and $H_R$ terms act on particles of
arbitrary dimension.  Alternatively, we can think of them as being
nonlocal Hamiltonians for a system of qubits.  We now wish to
decompose them into local terms, acting on particles of dimension $O(1)$, 
while retaining their desirable properties. This is done in two steps: 
we first construct a Hamiltonian $H'_{LMR}$ 
with the desired 
properties except the interactions are of polynomial strength, and then we correct this unphysical assumption and derive our final 
Hamiltonian  $H^{\text{gadget}}_{LMR}$. 

We start by showing in Subsection \ref{sec:compute3} that $H_L$
can be made local.  We do this using Kitaev's
circuit-to-Hamiltonian construction applied to the circuit computing
the application of the expander, padded with polynomially many
identity gates at the end of the computation. This derives a local
Hamiltonian $H^{\text{Kit}}_{L}$ with ground states very close to the
ground states of $H_L$ in Section~\ref{sec:HLMR}, tensored with some
state in an additional ancilla register.  The price we pay in this construction
is an inverse polynomial gap instead of a constant one since Kitaev's
construction has an inverse polynomial gap. To avoid this, we multiply
the local interaction terms in $H_L^{\text{Kit}}$ by a polynomial
prefactor and arrive at a Hamiltonian $H'_{L}$ with a constant gap, as
stated in Claim~\ref{HclockClaim}. However, its terms have
polynomially large, unphysical norms.

Next, in Subsection~\ref{sec:double}, we show in
\thmref{HlocalProperties} that by using the above local construction
on 
both sides of the $4$-particle chain of Section \ref{sec:HLMR},
without changing the middle interactions, we arrive at a Hamiltonian
whose strength of the middle interaction remains $O(1)$, while its unique
ground state retains all the desired properties of of the
four-particle Hamiltonian $H$ from Section~\ref{sec:HLMR}. Note that
the interaction terms which are not in the middle are still of
polynomial strength.  This gives us a local Hamiltonian $H'_{LMR} =
H'_L + H'_R + H_M$ with a constant gap, and a unique, entangled ground
state, just as we had for $H$. We now wish to make the strength of the
interactions on both sides bounded as well.

In Theorem \ref{FinalClaim}, 
we decompose each high-norm local interaction term in $H'_L$ and $H'_R$ into many local, constant-norm terms, using the strengthening gadgets 
of \cite{NagajCao}. Thus, we end up with a local Hamiltonian $H^{\text{gadget}}_{LMR}$ with all the desired properties of $H_{LMR}$ from Section~\ref{sec:HLMR}. We note that once again a price is to be paid:
in our final local, bounded-interactions Hamiltonian, each particle is involved in polynomially many 2-body interactions. 
It remains open to make the degree of interaction bounded.

\subsection{Evaluating a quantum expander locally (3 computations in parallel)}
\label{sec:compute3}

Let us translate the Hamiltonian from Section~\ref{sec:HLMR} into a local one.
We start by mimicking $H_L$ by a sum of local terms.
The Hamiltonian $H_L$ acts on a space of dimension $3D$, and its ground states have form 
\begin{align}
	\ket{\Phi_x} = \frac{1}{\sqrt{3}} \sum_{i=1}^{3}
	\ket{i} \otimes U_i \ket{x}. \label{phiX}
\end{align}
We will now enlarge our system and
find a local Hamiltonian $H'_L$, whose ground state 
will be close to
\begin{align}
		\ket{\Phi_x}\otimes \ket{w}, \label{extraTensor}
\end{align}
with $\ket{w}$ some state of an extra register.

\begin{claim}\label{cl:localexpander}
  There exists a frustration-free local Hamiltonian with a constant spectral gap, set
  on a chain of $2N+2$ constant-dimensional qudits, such that all ground
  states are $\epsilon$-close to the form \eqref{extraTensor}, with
  $\epsilon$ inverse polynomial in $N$. The local terms of the
  Hamiltonian are of norm bounded by $O(\poly(N))$.
\label{HclockClaim}
\end{claim}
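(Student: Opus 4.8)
The plan is to run Kitaev's circuit-to-Hamiltonian construction on a circuit that coherently evaluates the expander. Since $\poly(D)$ run-time is permitted, fix efficient reversible circuits $V_2,V_3$ of size $\polylog(D)$ that compute $U_2,U_3$ using $\polylog(D)$ ancilla qubits that begin and end in $\ket 0$ (every poly-size unitary admits such a clean, ancilla-restoring implementation via the standard uncomputation trick). Let $\sigma$ be a qutrit register and $F$ the single gate with $F\ket 0_\sigma=\frac1{\sqrt3}\sum_{i=1}^3\ket i$, and let $V$ be the circuit that first applies $F$ and then, controlled on $\sigma$, applies $V_i$ to the work/ancilla registers for $i=2,3$ (the branch $i=1$ uses $U_1=I$ and needs no gates). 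Then $V$ maps $\ket 0_\sigma\ket x\ket 0_{\mathrm{anc}}$ to $\frac1{\sqrt3}\sum_i\ket i\otimes U_i\ket x\otimes\ket 0_{\mathrm{anc}}=\ket{\Phi_x}\otimes\ket 0_{\mathrm{anc}}$ with $\ket{\Phi_x}$ as in \eqref{phiX}. After inserting SWAP gates so that every gate of $V$ acts on adjacent qudits, $V$ is a one-dimensional circuit with $T_0=\polylog(D)$ gates; I then pad it with $T_1$ identity gates at the end, where $T_1$ is a polynomial chosen below.

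Next I would write down the standard frustration-free clock Hamiltonian $H^{\mathrm{Kit}}_L=H_{\mathrm{clock}}+H_{\mathrm{prop}}+H_{\mathrm{in}}$ for this circuit on a chain of $\Theta(T_0+T_1)$ constant-dimensional qudits (the claim's $2N+2$ qudits), where $H_{\mathrm{in}}$ only penalizes the $\sigma$ and ancilla registers for not being $\ket 0$ at clock time $0$, and there is no output term. The usual analysis then gives that $H^{\mathrm{Kit}}_L$ is frustration-free; that its ground space is exactly the $D$-dimensional span of the history states $\ket{\eta_x}:=\frac1{\sqrt{T+1}}\sum_{t=0}^{T}\ket t\ket{\psi_t(x)}$ (where $T=T_0+T_1$ and $\ket{\psi_t(x)}$ is the state of the data registers after $t$ gates on input $\ket 0_\sigma\ket x\ket 0_{\mathrm{anc}}$), so that $\Gamma:\ket x\mapsto\ket{\eta_x}$ is an isometry whose range is the ground space; and that it has spectral gap $\Omega(1/\poly(T))$ above the ground space. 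Multiplying every term by the relevant polynomial in $T$ boosts the gap to $\Omega(1)$ while keeping each term's norm $O(\poly(T))=O(\poly(N))$ and preserving frustration-freeness; this is the Hamiltonian $H'_L$ of the claim.

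For the closeness statement, observe that $\ket{\psi_t(x)}=\ket{\psi_{T_0}(x)}=\ket{\Phi_x}\otimes\ket 0_{\mathrm{anc}}$ for every $t\ge T_0$, so $\ket{\eta_x}$ splits as $a\,\ket w\otimes\ket{\Phi_x}+b\,\ket{\xi_x}$, where $\ket w$ is the $x$-independent state (uniform superposition of clock values $\{T_0,\dots,T\}$) $\otimes\,\ket 0_{\mathrm{anc}}$, $\ket{\xi_x}$ is the normalized early part supported on clock values $<T_0$, and $a^2=\frac{T_1+1}{T+1}$, $b^2=\frac{T_0}{T+1}$. Since the early and late clock supports are orthogonal, $\Gamma-\ket w\otimes\Phi$ (with $\Phi:\ket x\mapsto\ket{\Phi_x}$) has operator norm $\max(1-a,b)=O(\sqrt{T_0/T_1})$, so every unit ground state $\ket\eta=\Gamma\ket v$ lies within $O(\sqrt{T_0/T_1})$ of $\ket{\Phi_v}\otimes\ket w$, which is of the form \eqref{extraTensor}. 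Choosing $T_1$ a large enough fixed polynomial in $D$ (e.g.\ $T_1=D^2T_0$, permissible since we only promise $\poly(D)$ run-time) makes $\epsilon:=O(\sqrt{T_0/T_1})$ inverse-polynomial in the chain length $N=\Theta(T_0+T_1)$, which completes the claim.

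I expect the main obstacle to be the second bullet — that \emph{all} ground states, not just one convenient one, are close to the product form $\ket{\Phi_x}\otimes\ket w$: this forces the circuit to be engineered so that after its real gates the ancilla register is fully disentangled and in a state independent of both the input $\ket x$ and the branch $i$, and it is the identity-gate padding that then washes out the clock--data entanglement present at intermediate times. By comparison, the one-dimensional routing of the control qutrit, the frustration-freeness, and the gap amplification are all routine.
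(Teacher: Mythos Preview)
Your proposal is correct and is essentially the paper's own argument: Kitaev's construction applied to a controlled-expander circuit padded with identity gates, then rescaled to make the gap constant; the only cosmetic difference is that you create the qutrit superposition via an explicit gate $F$ whereas the paper enforces it directly in $H_{\mathrm{init}}$ through the projector $\bbI-\proj{\alpha_3}$. One caveat worth fixing: inserting SWAPs so that every gate is nearest-neighbor does not by itself put the resulting clock Hamiltonian on a one-dimensional chain (the unary clock register is geometrically separate from the data, so propagation terms are still non-local), so---as the paper does---you should invoke the \cite{AGIK}/\cite{QMAcomplete1Dd8} construction that pairs clock and data particles into a single line of constant-dimensional qudits.
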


We prove Claim \ref{cl:localexpander} with $N'=\poly(n)$ qubits and
$5$-local interactions (in general geometry). This construction can
then be recast on a chain of $2N+2=\poly(n)$ qudits using \cite{AGIK}
or \cite{QMAcomplete1Dd8}.  
There, the clock/data registers (with $N$ particles each) can be seen as sitting on top of each other, and pair clock/data particles into larger qudits. These will then sit on a chain $b_N$-$\cdots$-$b_1$-$\sigma_1$-$\sigma_2$-$b'_1$-$\cdots$-$b'_N$,
with qutrits $\sigma_1$, $\sigma_2$ in the middle.

We construct the Hamiltonian of Claim~\ref{HclockClaim} following Kitaev's Circuit-to-Hamiltonian construction \cite{KitaevBook}. 
It allows us to write down a Hamiltonian whose ground states are the history states of a quantum computation $V$, i.e.
states of the form
\begin{align}
	\ket{\psi^{\text{hist}}_{y}} = \frac{1}{\sqrt{T}} \sum_{t=0}^{T} \ket{t}_k \otimes V_{t} \dots V_1 \left( \ket{y}\otimes \ket{0}_q \right),
\end{align}
where $k$ is an extra ``clock'' register, $q$ is an ancilla register, 
$\ket{y}$ is some initial state of a data register and $V_t$ are the gates of some circuit $V$, acting on the data register.
Our data register will contain $n$ data qubits (for simplicity, set $2^n = D$) and a ``control'' qutrit $a$. 
We want to get the history state of the circuit $V$ with unitaries 
\begin{align}
	V_t = \sum_{i=1}^{3} \ket{i}\bra{i}_a \otimes U_{i,t},
\end{align}
for $t=1,\dots,\tau$.  Here $U_{i,1},\ldots,U_{i,\tau}$ are the gates
that together implement $U_i$ from the quantum expander, including uncomputing
any changes to the ancilla register $q$ at the end.  On top of this, we pad the circuit $V$ with many identity gates $V_t = \ii$ for $\tau < t \leq T$, for some $T \gg \tau$, setting
$\epsilon = \frac{\tau}{T} = \frac{1}{\poly(n)}$.
We also require an extra clock register $k$ capable of locally implementing a clock with $T+1$ clock states, as well as an ancilla scratch register $q$. 
The ground states (history states of $V$) for the new Hamiltonian $H'_L$ 
have form
\begin{align}
		\ket{\Psi_x} &= 
		\frac{1}{\sqrt{T+1}} \sum_{t=0}^{T}
		\ket{t}_{k}	 \otimes \label{groundL}
V_t \dots V_1 \left(\frac{1}{\sqrt{3}} \sum_{i=1}^{3}  \ket{i}_a \ket{x}_{d} \ket{0\cdots 0}_{q} \right).
\end{align}
We will build $H^{\text{Kit}}_L = H_{\text{init}} + H_{\text{prop}}$ from two parts. 
First, propagation-checking:
\begin{align}
		H_{\text{prop}} &= \frac{1}{2} \sum_{t=1}^{T}
		\left(
				\ket{t-1}\bra{t-1}_k + \ket{t}\bra{t}_k
				\right) \label{Hproj}
- \frac{1}{2} \sum_{t=1}^{T}
		\left(
				\ket{t-1}\bra{t}_k \otimes V_t^\dagger 
				+ \ket{t}\bra{t-1}_k \otimes V_t
				\right).				
\end{align}
Second, we need to ensure proper initialization by adding a projector that prefers a uniform superposition on the control qutrit when the clock register is $\ket{0}_k$ (we want all three computations to run on the same input). 
Adding standard ancilla initialization-checking, we get
\begin{align*}
		H_{\text{init}} &= \ket{0}\bra{0}_{k} \otimes \left[	
		\ii - \ket{\alpha_3}\bra{\alpha_3}_{a} 
		+ \sum_{i=1}^{s} \ket{1}\bra{1}_{q_i} \right],
\end{align*}
with $\ket{\alpha_3} = \frac{1}{\sqrt{3}}\left(\ket{1}+\ket{2}+\ket{3}\right)$. 
We can now implement the clock register and the corresponding projectors by a a 5-local,
unary clock with $T+1$ qubits \cite{KitaevBook}.
The Hamiltonian $H^{\text{Kit}}_L$ is positive-semidefinite, and frustration-free. It has a zero-energy state of the form \eqref{groundL} for any basis state $\ket{x}$ of the $n$ working qubits. Furthermore, the energy gap of $H^{\text{Kit}}_L$ to eigenstates with nonzero energy is $\Delta^{\text{Kit}}_L = \oo{T^{-2}}$ \cite{KitaevBook}.
Using the 1-D construction for a line of constant ($8$-dimensional) qudits from \cite{QMAcomplete1Dd8} based on \cite{AGIK}, which also has a gap that scales as an inverse polynomial in $T$, this results in a 1-D Hamiltonian with the properties we want.

Let us consider the ground states more closely.  
For $t\geq \tau$, 
the data register is in the desired state $\ket{\Phi_x}$ \eqref{phiX},
the ancilla register is uncomputed, and it is only the 
clock register that changes. Recalling $T\gg \tau$, we realize that each $\ket{\Psi_x}$ can be rewritten as
\begin{align}
	\ket{\Psi_x} &= \frac{1}{\sqrt{T}}
	\sum_{t=1}^\tau \ket{\varphi_{x,t}}
+  \ket{\Phi_x}_{cd}\otimes  \frac{1}{\sqrt{T}} \left(\sum_{t=\tau+1}^T \ket{t}_k \right)				
				\otimes \ket{0\cdots 0}_q
 \nonumber\\
 &= \sqrt{\epsilon} \, \ket{v_{x}} + 
					\sqrt{1-\epsilon} \, \ket{\Phi_{x}}_{cd} \otimes \ket{w},  \label{psiL}
\end{align}
with some normalized
vectors $\ket{v_x}$ and $\ket{w}$. 
Each ground state is thus as close to $\ket{\Phi_x}\ket{w}$ \eqref{extraTensor} as we want, because we are free to choose $T$ as large a polynomial 
as we want, making 
$\epsilon=\tau/T$ an inverse polynomial as small as we want. 

The gap of the Hamiltonian $H^{\text{Kit}}_L$ is however not constant. 
We rescale the interaction strengths of all terms in $H^{\text{Kit}}_L$ by $T^2$ (or by a higher polynomial in $T$ for the 1-D construction), 
and look at $H'_L = \poly(T) \cdot H^{\text{Kit}}_L$. This new $H'_L$ satisfies the requirements of 
Claim~\ref{HclockClaim}.

\subsection{A local Hamiltonian with an entangled ground state}
\label{sec:double}

We now take two copies of the system from the previous Subsection, and
construct a Hamiltonian $H'_{LMR} = H'_L + H'_R + H_M$.   We keep the
same two-qutrit middle term $H_M$ from Eq.~(\ref{HM}) in
\secref{HLMR}, but will replace the left and right terms $H_L,H_R$
with the construction from Subsection~\ref{sec:compute3}.
\begin{theorem}
\label{thm:HlocalProperties}
The 1-D qudit Hamiltonian $H'_{LMR} = H'_L + H'_R + H_M$ with terms of
norm $\poly(n)$ has a unique ground state,  whose entanglement accross the middle cut
is at least $\Omega(\log(D))$, and a constant energy gap.
\end{theorem}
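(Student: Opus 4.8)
The plan is to reduce everything to \secref{HLMR} — \lemref{ground1}, \lemref{entangledstate}, \lemref{gap1} — by treating $H'_L$ and $H'_R$ as inverse-polynomially accurate local simulations of the abstract $H_L,H_R$, with the accuracy parameter $\epsilon$ under our control through the padding length $T$ of Claim~\ref{cl:localexpander}. First, since the terms of $H'_L$ touch only the left registers (with the qutrit $\sigma_1$ among them) and those of $H'_R$ only the right registers (with $\sigma_2$), and these two register sets are disjoint and exhaust the system, $H'_L+H'_R$ is frustration-free with ground energy $0$, constant gap $\gamma_0$ (the smaller of the two gaps of Claim~\ref{cl:localexpander}), and ground space $\mathcal G_{LR}=\mathcal G_L\otimes\mathcal G_R$. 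By \eqref{psiL} and the standard analysis of Kitaev's construction, $\mathcal G_L$ is the $D$-dimensional image of the isometry $\xi\mapsto\ket{\Psi_\xi}:=\sum_x\xi_x\ket{\Psi_x}$, which is $O(\sqrt\epsilon)$-close in operator norm to $W_L:\xi\mapsto\ket{\Phi_\xi}\otimes\ket w$ (with $\ket{\Phi_\xi}:=\sum_x\xi_x\ket{\Phi_x}$), because the two differ by an $O(\sqrt\epsilon)$-norm term, essentially the $\sqrt\epsilon\,\ket{v_\xi}$ summand of \eqref{psiL}, which is supported on ``early'' clock times and hence orthogonal to $\ket{\Phi_{\xi'}}\otimes\ket w$ for every $\xi'$. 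The right side is identical, with each $U_i$ replaced by the transpose $U_i^{\mathsf T}$ dictated by the form of $H_R$ (the $U_i^{\mathsf T}$ still form a quantum expander); write $W_R$ for the analogous isometry. Consequently the projector $\Pi_{LR}$ onto $\mathcal G_{LR}$ satisfies $\|\Pi_{LR}-\Pi^\sharp\|=O(\sqrt\epsilon)$, where $\Pi^\sharp:=(W_L\otimes W_R)(W_L\otimes W_R)^\dagger$, and $(W_L\otimes W_R)^\dagger$ carries the range of $\Pi^\sharp$ unitarily onto the ground space of the abstract $H_L+H_R$.

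Next I would pin down the ground state by computing the effective middle term. Conjugating $\Pi^\sharp H_M\Pi^\sharp$ by $W_L\otimes W_R$ gives $H_M$ restricted to the abstract ground space of $H_L+H_R$; by \lemref{ground1} this operator has a unique zero eigenvector, the highly entangled $\ket G$, and by the estimate \eqref{eq:HM-exp-high} — which uses only $\tr X=0$, $|X|=1/3$, and \eqref{eq:qexp2} — every vector of that space orthogonal to $\ket G$ has $H_M$-expectation at least a positive constant $\beta_0$. Since $\|\Pi_{LR}H_M\Pi_{LR}-\Pi^\sharp H_M\Pi^\sharp\|=O(\sqrt\epsilon)$, Weyl's inequality shows that the restriction of $\Pi_{LR}H_M\Pi_{LR}$ to $\mathcal G_{LR}$ has lowest eigenvalue $O(\sqrt\epsilon)$ with a unique eigenvector $\ket{G'}$ that is $O(\sqrt\epsilon)$-close to $\ket G\otimes\ket{w_L}\otimes\ket{w_R}$, and second-lowest eigenvalue $\ge\beta_0-O(\sqrt\epsilon)$.

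The gap then follows by the argument behind \lemref{gap1}/\lemref{two-proj}, but now with a norm-$1$ perturbation. For a unit vector $\ket\psi\perp\ket{G'}$, decompose $\ket\psi=\alpha\ket g+\beta\ket{g^\perp}$ with $\ket g\in\mathcal G_{LR}$ of unit norm and $\ket{g^\perp}\perp\mathcal G_{LR}$. If $|\beta|^2\ge\delta$ for a small constant $\delta$, then $H'_L+H'_R\ge\gamma_0(\bbI-\Pi_{LR})$ and $H_M\ge 0$ give $\bra\psi H'_{LMR}\ket\psi\ge\gamma_0\delta$. Otherwise $\alpha\ne0$ forces $\ket g\perp\ket{G'}$, so $\bra g H_M\ket g\ge\beta_0-O(\sqrt\epsilon)$; expanding $\bra\psi H_M\ket\psi$, bounding the cross term by Cauchy--Schwarz for the positive form $H_M$ and dropping the nonnegative $\bra{g^\perp}H_M\ket{g^\perp}$, we get $\bra\psi H'_{LMR}\ket\psi\ge(1-\delta)(\beta_0-O(\sqrt\epsilon))-2\sqrt\delta\ge\beta_0/2$ once $\delta$ and $\epsilon$ are small. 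Hence every vector orthogonal to $\ket{G'}$ has energy $\ge\Delta:=\min(\gamma_0\delta,\beta_0/2)$, while $\bra{G'}H'_{LMR}\ket{G'}=\bra{G'}H_M\ket{G'}=O(\sqrt\epsilon)$; by min--max the ground energy is $O(\sqrt\epsilon)$, the gap is $\ge\Delta-O(\sqrt\epsilon)$, the ground state $\ket{G_{\mathrm{gs}}}$ is unique, and comparing the two energies shows $\ket{G_{\mathrm{gs}}}$ is $O(\sqrt\epsilon)$-close to $\ket{G'}$, hence to $\ket G\otimes\ket{w_L}\otimes\ket{w_R}$. Finally, across the middle cut (left registers including $\sigma_1$ versus right registers including $\sigma_2$) this product state has entanglement entropy equal to that of $\ket G$ across $\Sigma_L\sigma_1\mid\sigma_2\Sigma_R$, namely $\log_2 D$ by \lemref{entangledstate}; since $\ket{G_{\mathrm{gs}}}$ is $O(\sqrt\epsilon)$-close to it, the Fannes--Audenaert inequality shows its entanglement is within $O(\sqrt\epsilon\,\poly(n))+O(h_2(\sqrt\epsilon))$ of $\log_2 D$, which is $o(1)$ once $T$ is taken to be a large enough polynomial, giving $\Omega(\log D)$. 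The term norms are $\poly(n)$ for $H'_L,H'_R$ and $\|H_M\|\le 1$, as claimed.

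The main obstacle is not any single estimate but the bookkeeping that keeps all the error terms genuinely $O(\sqrt\epsilon)=1/\poly$ and, crucially, makes the spectral argument depend only on the two quantities that survive localization — the constant gap and zero ground energy of $H'_L+H'_R$, together with $\|H_M\|\le 1$ — and never on the $\poly(n)$ norms of the individual left and right terms; one must also keep $\epsilon$ small compared with the relevant $1/\poly(n)$ so that the entropy-continuity step still yields $\Omega(\log D)$.
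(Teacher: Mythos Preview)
Your reduction to \secref{HLMR} is sound up through the gap argument: the isometry comparison $\|\Pi_{LR}-\Pi^\sharp\|=O(\sqrt\epsilon)$, the Weyl step for the compressed middle term, and the case analysis on $|\beta|^2$ all go through, and they parallel the paper's use of \clmref{s0}, \eq{mu-prime}, and \lemref{two-proj}. Where your argument breaks is the final entanglement step, and this is exactly the obstacle the paper flags: ``the large dimension of the ancilla states means we cannot directly use Fannes's inequality.''

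The issue is a circularity in the parameters. The $\poly(n)$ factor you feed into Fannes--Audenaert is $\log d_L$, the log-dimension of the left half of the chain, and this is not a fixed polynomial independent of your padding choice: the unary clock (and the 1-D recasting of \cite{AGIK,QMAcomplete1Dd8}) forces the chain length, hence $\log d_L$, to be $\Omega(T)$. Meanwhile $\epsilon=\tau/T$ is the best the history-state approximation gives, so the trace distance between $\ket{G_{\mathrm{gs}}}$ and $\ket{G}\otimes\ket{w_L}\otimes\ket{w_R}$ is at best $\Theta(\sqrt{\tau/T})$. The Fannes error is therefore of order $\sqrt{\tau/T}\cdot T=\sqrt{\tau T}$, which \emph{grows} with $T$; increasing $T$ to shrink $\epsilon$ only makes this worse. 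There is no choice of $T$ that makes $\sqrt\epsilon\cdot\log d_L=o(1)$, so your closing sentence ``which is $o(1)$ once $T$ is taken to be a large enough polynomial'' does not hold.

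The paper sidesteps this with a dimension-free entropy lower bound: from the overlap $\braket{\gamma}{G'}\geq 1-\eps'$ and the fact that $\ket{G'}$ has $D$ equal Schmidt coefficients, one bounds directly how many Schmidt coefficients of $\ket\gamma$ can exceed $\kappa/D$ and shows the remaining mass already contributes $(1-o(1))\log D$ entropy. That argument uses only the overlap and the flatness of the target Schmidt spectrum, never the ambient dimension, which is why it survives the large ancilla register.
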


Unlike in \secref{HLMR}, this Hamiltonian is no longer frustration
free.  However, a qualitatively similar version of the argument from
that section will work.  One change is that we will work with an
approximate ground state.   Define
\begin{align}
	\ket{G'} &:= \frac{1}{\sqrt{D}}\sum_{x=1}^D \ket{\Phi_x}\ket{\Phi_x}\ket{w}\ket{w}
	\label{groundLMR}\\
	&= 
	\frac{1}{3\sqrt{D}} \left[\begin{array}{rrr}
		\ii & U_2 & U_3 \\
		U_2 & U_2^2 & U_2 U_3 \\
		U_3 & U_3 U_2 & U_3^2
 	 	\end{array}\right] \otimes \ket{w}\otimes\ket{w} =\ket{G}\ket{w}\ket{w},
	\nonumber
\end{align}
with $\ket{\Phi_x}$ from \eqref{phiX}, $\ket{G}$ from
Lemma~\ref{lem:ground1}, and $\ket{w}$ a state of an ancilla register.
Thus the state $\ket{G'}$ is exactly the ground state we have in
Section~\ref{sec:entangledG}, with ancilla states added.  It is not
the ground state of $H'_{LMR}$, nor can we even prove that it has low
energy. However, we will later construct a state $\ket{G_\eps'}$ that
both has low enough energy to be close to the true ground state, and
is close enough to $\ket{G'}$ to have large entanglement.

The rest of our argument breaks up into the following subsidiary claims.
\begin{claim}\label{clm:GHG-small}
$\bra{G_\eps'}H'_{LMR}\ket{G_\eps'} \leq 1/\poly(n)$ (with
$\ket{G'_\eps}$ defined later).
\end{claim}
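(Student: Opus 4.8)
The plan is to take $\ket{G'_\eps}$ to be the superposition, over basis states $\ket x$ of the data register, of the \emph{exact} frustration-free zero-energy states $\ket{\Psi_x}$ of $H'_L$ from \eqref{groundL} (Claim~\ref{HclockClaim}), paired with the analogous zero-energy states of $H'_R$:
\[
  \ket{G'_\eps} \;:=\; \frac{1}{\sqrt D}\sum_{x=1}^D \ket{\Psi_x}_L \otimes \ket{\Psi_x}_R .
\]
The point of building $\ket{G'_\eps}$ out of the history states $\ket{\Psi_x}$ rather than the idealized tensor factors $\ket{\Phi_x}\ket w$ appearing in $\ket{G'}$ is precisely to neutralize the polynomial-norm left and right terms. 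Since $H'_L=\poly(T)\cdot H^{\text{Kit}}_L$ is positive semidefinite and each $\ket{\Psi_x}_L$ lies in its kernel, the $D$-dimensional span $\mathrm{span}\{\ket{\Psi_x}_L\}$ is contained in $\ker H'_L$, and hence $\ket{G'_\eps}\in\ker(H'_L\otimes I_R)$ even though it is entangled with the right registers. Therefore $\bra{G'_\eps}H'_L\ket{G'_\eps}=0$, and symmetrically $\bra{G'_\eps}H'_R\ket{G'_\eps}=0$: the polynomial prefactors contribute nothing.

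It then remains to bound the middle term, for which I would show $\ket{G'_\eps}$ is close to $\ket{G'}$. By the decomposition \eqref{psiL}, $\ket{\Psi_x}=\sqrt{1-\eps}\,\ket{\Phi_x}_{cd}\ket w+\sqrt\eps\,\ket{v_x}$; expanding the product $\ket{\Psi_x}_L\otimes\ket{\Psi_x}_R$ and summing over $x$ yields $\ket{G'_\eps}=\ket{G'}+\ket\delta$, where $\ket{G'}$ is the state of \eqref{groundLMR} and $\ket\delta$ gathers the terms carrying at least one factor $\sqrt\eps$. Using that the $\ket{v_x}$ are mutually orthogonal across $x$ (the circuit acts unitarily on distinct inputs $\ket x$), that the same holds for the $\ket{\Phi_x}$, and that $\ket{v_x}$ and $\ket{\Phi_{x'}}\ket w$ occupy disjoint clock subspaces, the norms of the pieces of $\ket\delta$ do not accumulate a factor of $D$, and one gets $\|\ket\delta\|=O(\sqrt\eps)$. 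Since $H_M$ acts trivially on the clock and ancilla registers and $H_M\ket G=0$ by Lemma~\ref{lem:ground1}, we have $H_M\ket{G'}=0$, so $\bra{G'}H_M\ket\delta=0$ and therefore
\[
  \bra{G'_\eps}H_M\ket{G'_\eps} \;=\; \bra\delta H_M\ket\delta \;\le\; \|H_M\|\,\|\ket\delta\|^2 \;=\; O(\eps).
\]

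Combining the three contributions gives $\bra{G'_\eps}H'_{LMR}\ket{G'_\eps}=0+O(\eps)+0=O(\eps)$, which is $1/\poly(n)$ since $\eps=\tau/T$ was chosen to be an inverse polynomial in $n$; this proves the claim. I do not expect a genuine obstacle, only two points that need care: (i) insisting that $\ket{G'_\eps}$ be assembled from the true history states $\ket{\Psi_x}$, so that the $H'_L$ and $H'_R$ expectations vanish \emph{exactly} despite their large norms; and (ii) carrying out the orthogonality bookkeeping in the second step so that $\|\ket\delta\|$ stays $O(\sqrt\eps)$ and not, say, $O(\sqrt{D\eps})$. (That $\ket{G'_\eps}$ is a unit vector follows since the $\ket{\Psi_x}_L\otimes\ket{\Psi_x}_R$ are orthonormal across $x$.)
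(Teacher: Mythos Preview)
Your proposal is correct and takes essentially the same approach as the paper: your explicit $\ket{G'_\eps}=\frac{1}{\sqrt D}\sum_x\ket{\Psi_x}\ket{\Psi_x}$ is exactly the state the paper obtains (implicitly) from \clmref{s0}, and like the paper you kill the $H'_L,H'_R$ contributions exactly by construction. The only cosmetic difference is in bounding the $H_M$ term: the paper uses the trace-distance bound \eq{s0-seps} together with $0\le H_M\le I$, while you use $H_M\ket{G'}=0$ to drop the cross terms and bound $\bra\delta H_M\ket\delta\le\|\ket\delta\|^2$; both yield the same $O(\eps)$.
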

\begin{claim}\label{clm:gap-large}
 The second-smallest eigenvalue of $H'_{LMR}$ is $\geq
  \Omega(1)$, implying that the gap is large.
\end{claim}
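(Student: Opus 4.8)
The plan is to rerun the frustration‑free argument of \lemref{gap1}, strengthened in two ways: $H'_L$ and $H'_R$ are frustration free only after the $\poly(T)$ rescaling of \secref{compute3}, and $H_M$ no longer exactly annihilates any state in the ground space of $H'_L+H'_R$. First I would record the structure of $H'_L$ (and, symmetrically, $H'_R$): after the rescaling it is positive semidefinite, frustration free, has gap $\ge 1$, and its ground space is the span of the history states $\{\ket{\Psi_x}\}$, equivalently the image of the $D$‑dimensional data register under the isometric lift $\ket x\mapsto\ket{\Psi_x}$. Since $H'_L$ and $H'_R$ act on disjoint particles, their ground‑space projectors $P_L,P_R$ commute; write $P_{LR}:=P_LP_R$ for the projector onto the combined ground space $\mathcal V_{LR}$. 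Then $(\bbI-P_L)+(\bbI-P_R)\ge \bbI-P_{LR}$ (for commuting projectors, $(1-a)+(1-b)\ge 1-ab$), so $H'_{LMR}=(H'_L+H'_R)+H_M\ge \tilde H:=(\bbI-P_{LR})+H_M$; by Weyl monotonicity it then suffices to lower bound the second‑smallest eigenvalue of $\tilde H$ by $\Omega(1)$.

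The core step is to understand $H_M$ on $\mathcal V_{LR}$. A unit vector of $\mathcal V_{LR}$ is the lift $\ket{\psi_Z}:=\sum_{x,y}Z_{x,y}\ket{\Psi_x}_L\ket{\Psi_y}_R$ of a $D\times D$ matrix $Z$, $|Z|=1$. Writing each history state as $\ket{\Psi_x}=\sqrt{1-\eps}\,\ket{\Phi_x}_{cd}\ket w+\sqrt\eps\,\ket{v_x}$ (Eq.~\eqref{psiL}), and using that $\{\ket{\Phi_x}_{cd}\ket w\}_x$ and $\{\ket{v_x}\}_x$ are orthonormal families occupying disjoint clock ranges, one gets $\ket{\psi_Z}=(1-\eps)\ket{\iota_Z}+O(\sqrt\eps)$ in norm, where $\ket{\iota_Z}=\bigl(\sum_{x,y}Z_{x,y}\ket{\Phi_x}^L_{cd}\ket{\Phi_y}^R_{cd}\bigr)\otimes\ket w\ket w$ is exactly (up to the inert ancilla registers) the ideal ground‑space vector of \secref{HLMR} in the block form \eqref{eq:AXBform} with $X=\tfrac13 Z$. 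Since $H_M$ acts trivially on the ancillas and $\|H_M\|\le 1$, this yields $\bra{\psi_Z}H_M\ket{\psi_Z}\ge\tfrac1{18}\bigl(|ZU_2-U_2Z|^2+|ZU_3-U_3Z|^2\bigr)-O(\sqrt\eps)$, and the two commutators depend only on the traceless part of $Z$. Using the norm identity $|ZU_i-U_iZ|=|U_iZU_i^\dagger-Z|$ together with the expander inequality \eqref{eq:qexp2}, it follows that $\bra{\psi_Z}H_M\ket{\psi_Z}\ge\tfrac{c^2}{4}-O(\sqrt\eps)=\Omega(1)$ for every unit $\ket{\psi_Z}$ with $\tr Z=0$, once $\eps=\tau/T$ is a small enough inverse polynomial. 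Taking the approximate ground state to be $\ket{G'_\eps}:=\tfrac1{\sqrt D}\sum_x\ket{\Psi_x}_L\ket{\Psi_x}_R=\ket{\psi_{\ii/\sqrt D}}$, we have $\langle\psi_{\ii/\sqrt D}|\psi_Z\rangle=\tr Z/\sqrt D$, so $\tr Z=0$ is exactly the condition $\ket{\psi_Z}\perp\ket{G'_\eps}$; hence $\bra\psi H_M\ket\psi\ge\mu:=\Omega(1)$ for every unit $\ket\psi\in\mathcal V_{LR}$ orthogonal to $\ket{G'_\eps}$, whereas $\bra{G'_\eps}H_M\ket{G'_\eps}=O(\eps)$ (since $H'_L$ and $H'_R$ annihilate $\ket{G'_\eps}$, this is \clmref{GHG-small}).

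These two facts combine via a scaled, approximate version of \lemref{two-proj}. For a unit vector $\ket\psi$ orthogonal to $\ket{G'_\eps}$ set $q=\|(\bbI-P_{LR})\ket\psi\|^2$, so $\bra\psi\tilde H\ket\psi=q+\|H_M\ket\psi\|^2$. If $q$ exceeds a small absolute constant $q_0$ we are done. Otherwise $\ket a:=P_{LR}\ket\psi$ lies in $\mathcal V_{LR}$, is orthogonal to $\ket{G'_\eps}$ (which lies in $\mathcal V_{LR}$), and has norm $\sqrt{1-q}$, so $\|H_M\ket a\|^2=\bra aH_M\ket a\ge\mu(1-q)$ and hence $\|H_M\ket\psi\|\ge\|H_M\ket a\|-\|H_M(\bbI-P_{LR})\ket\psi\|\ge\sqrt{\mu(1-q)}-\sqrt q\ge\tfrac12\sqrt\mu$ for $q_0$ small enough, giving $\bra\psi\tilde H\ket\psi\ge\tfrac14\mu$. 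In either case $\bra\psi\tilde H\ket\psi=\Omega(1)$, so by the min–max principle (test vector $\ket{G'_\eps}$) the second‑smallest eigenvalue of $\tilde H$, hence of $H'_{LMR}$, is $\Omega(1)$; combined with the bound $\bra{G'_\eps}H'_{LMR}\ket{G'_\eps}=O(\eps)$ of \clmref{GHG-small} on the smallest eigenvalue, this also gives a unique ground state and a constant gap. The step I expect to be the main obstacle is the error bookkeeping in the second paragraph: one has to check carefully that the history‑state ``garbage'' vectors $\ket{v_x}$ are orthonormal in $x$ (so the $H_M$ cross terms against $\ket{\iota_Z}$ stay $O(\sqrt\eps)$ and do not acquire a factor of $D$), that the computed and garbage parts live on disjoint clock states, and that $\eps$ can be driven below any prescribed inverse polynomial without harming the locality or the $\Omega(1)$ gap of $H'_L$ and $H'_R$.
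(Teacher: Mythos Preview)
Your proposal is correct and follows essentially the same route as the paper: reduce $H'_{LMR}$ to $(\bbI-P_{LR})+H_M$ via the gap of $H'_L+H'_R$, transfer the expander bound \eqref{eq:qexp2} from $S_0$ to $S_\eps$ to lower‑bound $H_M$ on the part of the history ground space orthogonal to $\ket{G'_\eps}$, and finish with a two‑projector argument together with the variational characterization of $\lambda_2$. The only differences are cosmetic---you unpack \clmref{s0} and \lemref{two-proj} inline rather than quoting them, make the explicit choice $\ket{G'_\eps}=D^{-1/2}\sum_x\ket{\Psi_x}\ket{\Psi_x}$, and track an $O(\sqrt\eps)$ error rather than the paper's stated $O(\eps)$---none of which affects the $\Omega(1)$ conclusion.
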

\begin{claim}\label{clm:entangled}
The ground state of $H'_{LMR}$ has large overlap with
  $\ket{G'}$, and therefore high entanglement.
\end{claim}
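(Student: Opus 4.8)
The plan is to supply the missing definition of $\ket{G_\eps'}$ and then prove Claims \ref{clm:GHG-small}--\ref{clm:entangled}, which together give the theorem: \clmref{GHG-small} shows the ground energy satisfies $E_0\le\bra{G_\eps'}H'_{LMR}\ket{G_\eps'}\le 1/\poly(n)$, \clmref{gap-large} shows $E_1\ge\Omega(1)$ (and, en route, that the ground state is unique), so the gap $E_1-E_0$ is $\Omega(1)$, and \clmref{entangled} gives the entanglement bound; the norm claim holds because each $H'_L,H'_R$ term is a Kitaev term rescaled by $\poly(T)=\poly(n)$ while each $H_M$ term has norm $O(1)$. I would set $\ket{G_\eps'}:=\frac{1}{\sqrt D}\sum_{x=1}^D\ket{\Psi_x}_L\ket{\Psi_x}_R$, where $\ket{\Psi_x}$ is the \emph{true} history-state ground state of $H'_L$ on input $\ket x$ from \eqref{groundL} and $\ket{\Psi_x}_R$ its mirror for $H'_R$. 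The $\ket{\Psi_x}$ are orthonormal, so this is already a Schmidt decomposition across the middle cut with $D$ equal coefficients $1/\sqrt D$; and the point of using the true $\ket{\Psi_x}$ rather than $\ket{\Phi_x}\ket w$ is that each is an \emph{exact} zero-energy eigenvector of $H'_L$ (resp.\ $H'_R$), so $\ket{G_\eps'}$ lies exactly in the joint ground space $V_{LR}$ of $H'_L+H'_R$ and its entire energy comes from $H_M$. For \clmref{GHG-small}, the decomposition $\ket{\Psi_x}=\sqrt\eps\,\ket{v_x}+\sqrt{1-\eps}\,\ket{\Phi_x}_{cd}\ket w$ of \eqref{psiL} gives $\ket{G_\eps'}=(1-\eps)\ket{G'}+\ket{\mathrm{err}}$ with $\|\ket{\mathrm{err}}\|=O(\sqrt\eps)$; since $H_M$ acts only on $\sigma_1\sigma_2$ and $H_M\ket{G'}=H_M(\ket G\ket w\ket w)=0$ by \lemref{ground1}, we get $\bra{G_\eps'}H'_{LMR}\ket{G_\eps'}=\bra{G_\eps'}H_M\ket{G_\eps'}=\bra{\mathrm{err}}H_M\ket{\mathrm{err}}\le\|\ket{\mathrm{err}}\|^2=O(\eps)=1/\poly(n)$.

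For \clmref{gap-large} I would rerun the argument of \lemref{gap1} with two modifications. First, because $H'_L,H'_R$ are Kitaev Hamiltonians multiplied by a sufficiently large polynomial prefactor absorbing the $\Theta(T^{-2})$ Kitaev gap, we have $H'_L+H'_R\ge g\,(\bbI-P_{LR})$ for a constant $g>0$ that we may take as large as we like, where $P_{LR}$ is the projector onto the ($D^2$-dimensional) space $V_{LR}=\Span{\ket{\Psi_x}_L\ket{\Psi_y}_R:x,y\in[D]}$. Second, on $V_{LR}$ the compression $P_{LR}H_MP_{LR}$ plays the role that $H_M$ played on the idealized ground space in \secref{HLMR}: the map $\ket{\Phi_x}\ket{\Phi_y}\ket w\ket w\mapsto\ket{\Psi_x}_L\ket{\Psi_y}_R$ is an isometry differing from the inclusion by $O(\sqrt\eps)$ in operator norm, so $P_{LR}H_MP_{LR}$ is, up to $O(\sqrt\eps)$, the operator on $D\times D$ matrices $Z$ whose value on $\ket Z\in V_{LR}$ is the quantity estimated in \eqref{eq:HM-exp-high}, which by the expander inequality \eqref{eq:qexp2} has a one-dimensional kernel (the scalar matrices $Z\propto\bbI$, i.e.\ $\ket{G_\eps'}$ after transport) and spectral gap $\Omega(1)$ above it. A projection-lemma computation in the spirit of \lemref{two-proj}---now with $\ket{G_\eps'}$ in the role played by $\ket G$ there, and with $g$ taken large relative to this gap---then shows that $\bra\psi H'_{LMR}\ket\psi\ge\delta_0$ for a constant $\delta_0>0$ and every unit vector $\ket\psi\perp\ket{G_\eps'}$. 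Consequently the eigenspace of $H'_{LMR}$ with eigenvalue below $\delta_0$ is one-dimensional (a two-dimensional one would contain a vector orthogonal to $\ket{G_\eps'}$), so the ground state $\ket{\psi_0}$ is unique, $E_1\ge\delta_0=\Omega(1)$, and $|\braket{\psi_0}{G_\eps'}|\ge1-1/\poly(n)$.

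For \clmref{entangled}, write the Schmidt decomposition $\ket{\psi_0}=\sum_k\sqrt{\lambda_k}\,\ket{e_k}_L\ket{f_k}_R$ with $\lambda_1\ge\lambda_2\ge\cdots$ across the middle cut. Since $\ket{G_\eps'}$ has $D$ Schmidt coefficients all equal to $1/\sqrt D$ there, von Neumann's trace inequality gives $|\braket{\psi_0}{G_\eps'}|\le\frac1{\sqrt D}\sum_{k=1}^D\sqrt{\lambda_k}$, hence $\sum_{k=1}^D\sqrt{\lambda_k}\ge(1-\nu)\sqrt D$ with $\nu:=1-|\braket{\psi_0}{G_\eps'}|=1/\poly(n)$; combined with $\sum_{k=1}^D\lambda_k\le1$, a short Cauchy--Schwarz/variance estimate forces all but an $O(\nu)$-fraction of the weight among $\lambda_1,\dots,\lambda_D$ onto values within a constant factor of $1/D$, whence $S(\rho_L)\ge-\sum_{k=1}^D\lambda_k\log\lambda_k\ge(1-O(\nu))\log D-O(1)=\Omega(\log D)$. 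The virtue of this route is that only the top $D$ Schmidt coefficients of $\ket{\psi_0}$ ever enter, so the (huge) dimension of the clock and ancilla registers never appears in a continuity bound. I expect the main obstacle to be \clmref{gap-large}: unlike in \secref{HLMR} the Hamiltonian is frustrated, so one must work with the approximate ground state $\ket{G_\eps'}$ and combine three separate approximations---the constant (rescaled) Kitaev gap, the $O(\sqrt\eps)$ discrepancy between the true and idealized ground spaces of $H'_L+H'_R$, and the expander gap inside $V_{LR}$---into a single clean spectral lower bound that survives all of them.
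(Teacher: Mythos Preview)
Your proposal is correct and follows essentially the same route as the paper: define $\ket{G_\eps'}$ as the history-state superposition lying in the ground space of $H'_L+H'_R$, bound its $H_M$-energy by $O(\eps)$, lower-bound $\lambda_2$ via \lemref{two-proj} applied to $\tilde P_\eps$ and $H_M$, deduce $|\braket{\psi_0}{G_\eps'}|\ge 1-O(\eps)$ from the gap, and extract the entropy from the Schmidt-coefficient inequality $|\braket{\psi_0}{\,\cdot\,}|\le D^{-1/2}\sum_{k\le D}\sqrt{\lambda_k}$ combined with a Cauchy--Schwarz/threshold estimate that avoids any Fannes-type dependence on the ancilla dimension. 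The only cosmetic differences are that the paper routes the last step through $\ket{G'}$ rather than $\ket{G_\eps'}$ (both have flat rank-$D$ Schmidt spectrum, so either works) and uses the variational formula $\lambda_2(X)=\sup_{\ket v}\lambda_{\min}(X+\proj v)$ in place of your direct Rayleigh-quotient bound on $\ket{G_\eps'}^\perp$; your remark that one may take $g$ large is harmless but not needed, as the paper gets by with any constant $\Delta'_{LR}$.
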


We begin by showing a precise sense in which $H_L', H_R'$ give an
approximation of $H_L,H_R$.
Define $H'_{LR}:=H'_L+H'_R$
to be the Hamiltonian acting on the two
sides of the chain without interaction.  The ground space
of $H'_{LR}$ is spanned by basis states of the form
\begin{align}
	\ket{\Psi_x}\ket{\Psi_y} &=
			\sqrt{\epsilon(2-\epsilon)}\,  \ket{z^{\epsilon}_{xy}}
			+ (1-\epsilon) \ket{\Phi_x}\ket{\Phi_y}\ket{w}^{\otimes 2}, 		
			\label{groundLReps}
\end{align}
where $\ket{\Psi_x}$ and likewise $\ket{\Psi_y}$ are given by
\eqref{psiL}, and $\epsilon = \frac{\tau}{T}$ is an inverse polynomial
which we can make as small as we want by increasing $T$ to a large
polynomial in $n$.

\begin{definition}\label{defs0} {\bf $S_0, S_\epsilon$:}
  Define $S_0$ to be the space spanned by states of the form
  $\ket{\Phi_x}\ot\ket{\Phi_y}$, where $\ket{\Phi_x}$ (and likewise
  $\ket{\Phi_y}$) are defined in \eqref{phiX}.  Define $S_\epsilon$ to
  be the space spanned by all states of the form with $\ket{\Psi_x}
  \ot \ket{\Psi_y}$, defined in \eqref{psiL}.  Define the
  corresponding projectors to be $P_0, P_\eps$.
\end{definition} 

\begin{claim} \label{clm:s0} 
Let $\ket{s_\epsilon}$ be any state in
  $S_\epsilon$; then there exists a state $\ket{s_0}\in S_0$ such that
\begin{equation} 
	\ket{s_\epsilon}=\sqrt{1-\epsilon'^2}\ket{s_0}\ket{w}\ket{w}+\epsilon'\ket{\epsilon},
\end{equation}
for $\epsilon'\le 2\epsilon$, and $\ket{\epsilon}$ orthogonal to
$\ket{s_0}\ket{w}\ket{w}$.  As a result \be \frac{1}{2}\norm{
  \proj{s_0} - \proj{s_\eps}}_1 \leq 2\eps .
\label{eq:s0-seps}\ee
\end{claim}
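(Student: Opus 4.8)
The plan is to exploit the explicit decomposition \eqref{groundLReps} of the product history states together with a clock--register orthogonality observation. Since the Kitaev history states $\ket{\Psi_x}$ (and likewise the states $\ket{\Phi_x}$ of \eqref{phiX}) are orthonormal as $x$ ranges over the $D$ computational--basis inputs, the products $\{\ket{\Psi_x}\ket{\Psi_y}\}$ and $\{\ket{\Phi_x}\ket{\Phi_y}\}$ form orthonormal bases of $S_\epsilon$ and $S_0$. Given $\ket{s_\epsilon}\in S_\epsilon$, expand $\ket{s_\epsilon}=\sum_{x,y}c_{xy}\ket{\Psi_x}\ket{\Psi_y}$ with $\sum_{x,y}|c_{xy}|^2=1$, and set $\ket{s_0}:=\sum_{x,y}c_{xy}\ket{\Phi_x}\ket{\Phi_y}\in S_0$, which is then automatically a unit vector. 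The goal is to show
\[
\ket{s_\epsilon}=(1-\epsilon)\,\ket{s_0}\ket{w}\ket{w}+\ket{R},
\]
with $\ket{R}$ orthogonal to the whole subspace $\cH_{\mathrm{data}}\ot\proj{w}\ot\proj{w}$ (in particular $\ket{R}\perp\ket{s_0}\ket{w}\ket{w}$). Granting this, normalization of $\ket{s_\epsilon}$ forces $\|R\|^2=1-(1-\epsilon)^2$, so the stated decomposition holds with $\sqrt{1-\epsilon'^2}=1-\epsilon$, i.e.\ $\epsilon'=\sqrt{2\epsilon-\epsilon^2}=O(\sqrt\epsilon)$ and $\ket{\epsilon}:=\ket{R}/\epsilon'$.

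The crux --- and the only point requiring care --- is the structural claim about $\ket{R}$, which comes down to the clock register of the Kitaev construction of \secref{compute3}. By \eqref{psiL}, $\ket{\Psi_x}=\sqrt{1-\epsilon}\,\ket{\Phi_x}_{cd}\ket{w}+\sqrt{\epsilon}\,\ket{v_x}$, where $\ket{w}$ is supported on the ``late'' clock values (those $\ge\tau$, with the scratch register in $\ket{0\cdots 0}$) while $\ket{v_x}$ is supported entirely on the ``early'' clock values $<\tau$; hence $\ket{v_x}$ is orthogonal to $(\text{anything on }cd)\ot\ket{w}$, on either side of the chain. Multiplying out $\ket{\Psi_x}\ket{\Psi_y}$ produces $(1-\epsilon)\ket{\Phi_x}\ket{\Phi_y}\ket{w}\ket{w}$ plus three terms, each carrying a factor $\ket{v_x}$ on the left or $\ket{v_y}$ on the right; by the previous sentence each of those three is orthogonal to $\cH_{\mathrm{data}}\ot\proj{w}\ot\proj{w}$ --- this is exactly the subspace in which the piece $\sqrt{\epsilon(2-\epsilon)}\ket{z^\epsilon_{xy}}$ of \eqref{groundLReps} lives. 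Summing against $c_{xy}$, the $(1-\epsilon)$--weighted pieces add coherently to $(1-\epsilon)\ket{s_0}\ket{w}\ket{w}$, and the remaining pieces, each lying in the orthogonal complement of that subspace, sum to the claimed $\ket{R}$.

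Finally, \eqref{eq:s0-seps} follows from the decomposition. For the pure states $\ket{s_\epsilon}$ and $\ket{s_0}\ket{w}\ket{w}$ one has $\tfrac12\norm{\proj{s_\epsilon}-\proj{s_0 w w}}_1=\sqrt{1-|\braket{s_0 w w}{s_\epsilon}|^2}=\epsilon'=O(\sqrt\epsilon)$, which already suffices for every later use since $\epsilon=\tau/T$ is a freely adjustable inverse polynomial. Moreover, reading $\proj{s_\epsilon}$ as the reduced state $\rho_\epsilon$ on the data registers: because $\ket{R}\perp\cH_{\mathrm{data}}\ot\proj{w}\ot\proj{w}$, the cross term vanishes when the clock and scratch registers are traced out, so $\rho_\epsilon=(1-\epsilon)^2\proj{s_0}+\sigma$ with $\sigma\ge 0$ and $\tr\sigma=1-(1-\epsilon)^2$; the triangle inequality then gives $\tfrac12\norm{\rho_\epsilon-\proj{s_0}}_1\le 1-(1-\epsilon)^2=2\epsilon-\epsilon^2\le 2\epsilon$, as stated. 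The main obstacle is thus purely bookkeeping: correctly tracking the tensor factors (control qutrit, data, clock, scratch, on both sides) while verifying that each $\ket{z^\epsilon_{xy}}$ is annihilated by the projector onto $\cH_{\mathrm{data}}\ot\proj{w}\ot\proj{w}$, uniformly in $x$ and $y$.
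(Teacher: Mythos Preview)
Your proof is correct and follows essentially the same ``direct calculation'' as the paper: expand in the orthonormal bases $\{\ket{\Psi_x}\ket{\Psi_y}\}$ and $\{\ket{\Phi_x}\ket{\Phi_y}\}$ of $S_\epsilon$ and $S_0$ and read off the decomposition from \eqref{groundLReps}. The paper's one-line proof highlights mutual orthogonality of the error vectors $\ket{z^\epsilon_{xy}}$; you instead use the more directly applicable clock-register fact that each $\ket{v_x}$ is supported on early clock values and hence orthogonal to anything tensored with $\ket{w}$ --- this is what actually forces $\ket{R}\perp\ket{s_0}\ket{w}\ket{w}$, and is arguably the cleaner route. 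You are also right to flag that the pure-state overlap gives only $\epsilon'=\sqrt{2\epsilon-\epsilon^2}=O(\sqrt\epsilon)$ rather than the stated $\epsilon'\le 2\epsilon$; your reduced-state reading (tracing out the clock and scratch registers, on which $H_M$ acts trivially) recovers the $2\epsilon$ trace-distance bound and is precisely what the downstream uses in \eq{Geps-low-E} and \eq{mu-prime} actually require.
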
 
\begin{proof}
  The proof follows by direct calculation, using Definition
  \ref{defs0} of $S_0$ and $S_\epsilon$ and the observation that the
  normalized error vectors $\ket{z_{xy}^\epsilon}$ from \eqref{psiL}
  are all orthogonal for different pairs of $x,y$. This follows from
  their definition as history states of different initial vectors, as
  seen in \eqref{psiL}.
\end{proof}

As a direct consequence of \clmref{s0} we establish
\clmref{GHG-small}.  Indeed, $\ket{G'}\in S_0 \ot \ket w \ot \ket w$
and by \clmref{s0} there exists an $\eps$-close state $\ket{G_\eps'}\in S_\eps$ in
the ground space of $H_{LR}'$.  Thus
\be \bra{G'_\eps}H_L' + H_M + H_R'\ket{G'_\eps} = 
\bra{G'_\eps}H_M\ket{G'_\eps} \leq 2\eps \leq 1/\poly(n),
\label{eq:Geps-low-E}\ee
where in the last step we have used \eq{s0-seps} and $0\leq H_M\leq I$.

Now we consider gap.
The Hamiltonians $H'_L$ and $H'_R$ act on independent subspaces, while
both Hamiltonians have a constant energy gap above the ground state
subspace as we proved in Claim \ref{cl:localexpander}. 
Therefore, $H'_{LR}$ has constant gap. Let us denote this
gap by $\Delta'_{LR}$, so that (using also the fact that $H_{LR}'$ has
lowest eigenvalue 0) we have the operator inequality
\be H'_{LR} \geq \Delta'_{LR} (\bbI - P_\eps).
\label{eq:prime-op-ineq}\ee
Continuing along the lines of the proof of \lemref{gap1}, define
$\tilde P_\eps := P_\eps - \proj{G_\eps'}$, $\tilde P_0 = P_0 - \proj{G}$
and define $\tilde S_\eps, \tilde S_0$ to be their supports.  
Now calculate
\begsub{mu-prime}
\mu &:= \min \{\bra{\psi}H_M\ket{\psi} : \ket\psi \in \tilde S_\eps,
\braket{\psi}{\psi}=1\} \\
&\stackrel{\eq{s0-seps}}{\geq}
 \min \{\bra{\psi}H_M\ket{\psi} : \ket\psi \in \tilde S_0\ot\ket{w}\ot\ket{w},
\braket{\psi}{\psi}=1\} - 2\eps\\
&\stackrel{\eq{HM-exp-high}}{\geq} \frac{c}{2} - 2\eps
\endsub

Denote the second-smallest eigenvalue of a Hermitian matrix $X$ by
$\lambda_2(X)$.  A variant of the Courant-Fischer min-max principle gives
the following variational characterization of $\lambda_2(X)$:
\be \lambda_2(X) = \sup_{\ket{v}} \lambda_{\min}(X + \proj v).\ee
We can apply this to our problem by observing that
\ba 
\lambda_{2}(H_{LMR}') 
& = \sup_{\ket{v}} \lambda_{\min}(H_{LMR}' + \proj{v}) 
\\ & \geq \lambda_{\min}(H_{LMR}' + \Delta'_{LR} \proj{G_\eps'}) 
\\ & \stackrel{\eq{prime-op-ineq}}{\geq}
\Delta'_{LR} \lambda_{\min}(\bbI - \tilde P_\eps + H_M) 
\\ & \stackrel{\text{\lemref{two-proj}}}{\geq}\;
\frac{c}{4}-\eps \geq \Omega(1)
\ea
This establishes \clmref{gap-large}.

To complete the proof, we need to show that the ground state is highly
entangled.  Two challenges which complicate the usual continuity
arguments are that $H_{LMR}'$ has a large norm and a large ancilla
dimension.  We sidestep these as follows.
Let $\ket{\gamma}$ denote the ground state of $H_{LMR}'$. Adjust its
overall phase so that $\braket{\gamma}{G_\eps'}$ is real, implying
$\ket{G_\eps'} = \sqrt{1-\delta}\ket\gamma +
\sqrt{\delta}\ket{\gamma^\perp}$ for some orthogonal state
$\ket{\gamma^\perp}$.  Then
\be \eps \stackrel{\eq{Geps-low-E}}{\geq}
\bra{G_\eps'}H'_{LMR}\ket{G_\eps'} = 
(1-\delta)\bra{\gamma}H'_{LMR}\ket{\gamma} +
\delta\bra{\gamma^\perp}H'_{LMR}\ket{\gamma^\perp}
\geq \delta \L(\frac c 4 -\eps\R),\ee
where this last inequality follows from the fact that $H_{LMR}'$ is
positive semi-definite and has gap $\geq c/4-\eps$.  Thus $\delta \leq
5\eps/c$ (assuming $\eps \leq c/20$).

Combining this with previous facts we have \be \ket\gamma
\approx_\delta \ket{G_\eps'}\approx_{2\eps} \ket{G'} =\ket{G}\ot\ket w
\ot \ket w\label{eq:approx-chain}\ee and the latter state is highly
entangled according to \lemref{entangledstate}.  Assume WLOG that
$\braket{\gamma}{G'}$ is real and nonnegative.  Thus $\braket{g}{G'}
\geq 1-\eps'$ for some $\eps' = \poly(1/n)$.  However, the large
dimension of the ancilla states means we cannot directly use Fannes's
inequality.  Let the Schmidt decomposition of $\ket{\gamma}$ be
$$\ket\gamma = \sum_i \sqrt{\lambda_i} \ket{L_i} \ot\ket{R_i},$$
with $\lambda_1 \geq \lambda_2 \geq \ldots \geq 0$ and $\sum_i
\lambda_i=1$.  
Then
$$1-\eps' \leq \braket{\gamma}{G'}
\leq \frac{1}{\sqrt{D}} \sum_{i=1}^D \sqrt{\lambda_i}.$$
Let $r$ be the largest $r$ for which $\lambda_r \geq \kappa/D$ for
$\kappa>1$ to be chosen later.  By normalization, $r \leq D/\kappa
\leq D$.   Let $\beta = \sum_{i\leq r} \lambda_i$.
Then
$$\sqrt{D}(1-\eps') \leq \sum_{i=1}^D \sqrt{\lambda_i}
\leq \sqrt{\frac{D}{\kappa}}\sum_{i\leq r}\lambda_i + 
\sqrt{D}\sum_{i=r+1}^D \lambda_i
\leq \sqrt{D}(\beta\kappa^{-1/2} + (1-\beta)).$$
The second inequality follows from $\sqrt{\lambda_i}\leq
\sqrt{\frac{D}{\kappa}}\lambda_i$ for $i\leq r$ in the first term and
  Cauchy-Schwarz in the second term.
Rearranging we find that $\beta \leq \eps'(1-\kappa^{-1/2})$.
We conclude that the entropy of entanglement is
$$\sum_i \lambda_i \log\L(\frac{1}{\lambda_i}\R)
\geq \sum_{i>r}\lambda_i \log\L(\frac{D}{\kappa}\R)
\geq (1-\eps'(1-\kappa^{-1/2})) \log\L(\frac{D}{\kappa}\R).$$
Optimizing over $\kappa$ we find that the entanglement is
$(1-o(1))\log(D)$.
This concludes the proof of \clmref{entangled} and therefore \thmref{HlocalProperties}.

\subsection{Decomposing the Hamiltonian $H'_L$ into $O(1)$-strength interaction terms}
\label{sec:gadgets}

We now handle the problem of large interaction norm.  The interactions
in the Hamiltonian $H'_L$ have norm
$\poly(T)$.  Each such term can be decomposed using the 
  strengthening quantum gadget construction by Nagaj and
Cao~\cite{NagajCao}, into $\poly(T)$ 
bounded-strength interactions acting on the original set of qudits
plus $\poly(T)$ extra ancilla qubits.  The gap of this new $H'_L$ will
remain a constant, while any state in its ground state will now be
$1/\poly(T)$ close to some $\ket{\Psi_x} \ket{0\cdots 0}_{\textrm{new
    ancillas}}$, with $\ket{\Psi_x}$ from \eqref{psiL}. This also
implies that each (less than a small constant energy) state of $H'_L$
is $1/\poly(T)$ close to the state $\ket{\Phi_x} \ket{w} \ket{0\cdots
  0}_{\textrm{new ancillas}}$ for some $x$. However, this is just what
we had in \eqref{extraTensor}, with an expanded ancilla register state
$\ket{w'} = \ket{w}\ket{0 \cdots 0}$. Therefore, all of the arguments
of Section~\ref{sec:double} go through, and we have shown that
\begin{theorem}
There exists a 2-body Hamiltonian on $n$ qudits, whose terms are of
 $O(1)$ norm. The interaction graph is as in Figure~\ref{fig-intro},
where the two particles on the two sides of the cut 
are qutrits. All particles are involved in at most 
$\poly(n)$ interactions. Moreover,  
the Hamiltonian is gapped with a unique ground state, 
such that the entanglement entropy across the middle 
cut scales as $\Omega\left(n^{c}\right)$ for some $0<c<1$.
\label{FinalClaim}
\end{theorem}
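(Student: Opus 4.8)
The plan is to obtain Theorem~\ref{FinalClaim} as the endpoint of the chain of constructions in this section: by \thmref{HlocalProperties} we already have the Hamiltonian $H'_{LMR} = H'_L + H'_R + H_M$ living on $\poly(n)$ qudits, with a unique ground state whose entanglement across the cut $\Sigma_L\sigma_1\mid\sigma_2\Sigma_R$ is $\Omega(\log D)$, a constant gap, and $2$-local terms laid out on a chain with two qutrits straddling the cut (using the $1$-D recasting of \secref{compute3}). The only remaining defect is that the terms inside $H'_L$ and $H'_R$ have norm $\poly(T)=\poly(n)$. Since $H_M$ already has $O(1)$ norm and touches only $\sigma_1,\sigma_2$, we leave it alone and only repair the two side Hamiltonians.

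The first step is to feed each $\poly(T)$-norm $2$-body term of $H'_L$ (and symmetrically of $H'_R$) into the strengthening gadget of Nagaj and Cao~\cite{NagajCao}: each such term is simulated by a sum of $\poly(T)$ interactions of $O(1)$ norm acting on the original qudits together with $\poly(T)$ fresh ancilla qubits prepared in $\ket{0}$, with these ancillas clustering around the edge they replace — which is why the interaction graph takes the chain-of-cliques form of Figure~\ref{fig-intro}\,a) and every particle ends up in $\poly(n)$ interactions. What I need to extract from the gadget is twofold: (i) the gadgetized side Hamiltonian $H^{\mathrm{gadget}}_L$ still has a constant spectral gap; and (ii) every state of $H^{\mathrm{gadget}}_L$ below some fixed constant energy is $1/\poly(T)$-close to $\ket{\Phi_x}\ket{w}\ket{0\cdots0}_{\mathrm{gadget}}$ for some basis state $\ket x$, i.e.\ exactly the form \eqref{extraTensor} with the ancilla register enlarged to $\ket{w'}:=\ket{w}\ket{0\cdots0}_{\mathrm{gadget}}$.

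Once (i) and (ii) are in hand, the analysis of \secref{double} goes through verbatim for $H^{\mathrm{gadget}}_{LMR}=H^{\mathrm{gadget}}_L+H^{\mathrm{gadget}}_R+H_M$ after substituting $\ket{w'}$ for $\ket{w}$: the ground space of $H^{\mathrm{gadget}}_L+H^{\mathrm{gadget}}_R$ is, up to $1/\poly(n)$ error, spanned by the states $\ket{\Psi_x}\ket{\Psi_y}$ of \eqref{psiL}; the low-energy witness of \clmref{GHG-small} still has energy $\le 1/\poly(n)$; the two-projector bound of \lemref{two-proj} together with the expander inequality \eqref{eq:qexp2} still yields a gap $\ge c/4 - o(1)=\Omega(1)$; and the Schmidt-truncation argument of \clmref{entangled} still gives the unique ground state entanglement $(1-o(1))\log D$ across the middle cut, since the ancilla registers $\ket{w'}$ factor out on each side of the cut. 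This produces the desired $2$-body, $O(1)$-norm Hamiltonian with $\poly(n)$-degree interaction graph, unique gapped ground state, and $\Omega(\log D)$ entanglement across the qutrit-qutrit cut.

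It remains to pass from $\Omega(\log D)$ to $\Omega(n^c)$: the expander unitaries $U_2,U_3$ on $\C^D$ with $D=2^{n_{\mathrm{data}}}$ are implemented by circuits of size $\poly(n_{\mathrm{data}})=\poly(\log D)$, and Kitaev's construction, the $1$-D qudit recasting, and the Nagaj--Cao gadgets each inflate the qudit count by only a further polynomial factor, so the total number $n$ of qudits is $\poly(\log D)$, i.e.\ $\log D = n^{\Omega(1)}$. Hence the entanglement entropy across the middle cut is $\Omega(\log D)=\Omega(n^c)$ for some constant $0<c<1$, at the price of $\poly(n)$ interaction degree. The step I expect to be the real work is verifying property (ii) cleanly — that the gadget guarantees, naturally stated for the side Hamiltonian in isolation, compose correctly with the unmodified middle term $H_M$ and with the perturbative bookkeeping of \secref{double}, so that \emph{uniqueness} of the ground state (the analogue of the ${\cal L}\cap{\cal R}\cap{\cal M}$ intersection being one-dimensional) as well as its entanglement survive, without the gadget ancillas ever coupling to $\sigma_1,\sigma_2$.
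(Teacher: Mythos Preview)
Your proposal is correct and follows essentially the same route as the paper: apply the Nagaj--Cao strengthening gadgets~\cite{NagajCao} only to the side Hamiltonians $H'_L,H'_R$, observe that the gadgetized sides retain a constant gap and that their low-energy states are $1/\poly(T)$-close to $\ket{\Phi_x}\ket{w'}$ with an enlarged ancilla $\ket{w'}=\ket{w}\ket{0\cdots 0}$, and then rerun the arguments of \secref{double} verbatim. Your explicit accounting of why $\Omega(\log D)=\Omega(n^c)$ (via the $\poly(\log D)$ circuit size of the expander unitaries and the polynomial blow-up of each construction step) is in fact more detailed than what the paper spells out.
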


\section{Entanglement testing and ground states of Hamiltonians}
\label{sec:comm}

\begin{figure}
\begin{center}
\includegraphics[width=13cm]{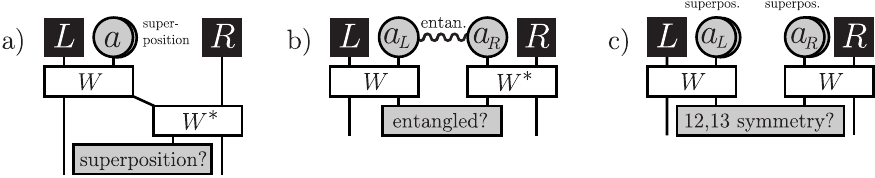}%
\caption{EPR testing procedures. a) The test from Section~\ref{sec:EPRprotocol}.
b) A modified test with two ancillas.
c) The Hamiltonian from Section~\ref{sec:HLMR} can be also easily recast as an EPR testing protocol.}%
\label{fig:commgames}\end{center}
\end{figure}

We now connect our two results more directly, by providing an alternative derivation of the results in Section \ref{sec:HLMR}. 
Starting from the 
EPR testing protocol 
 of Section~\ref{sec:EPRprotocol}, we turn it into a non-local Hamiltonian violating 
the generalized area law, using 
Kitaev's circuit-to-Hamiltonian construction. 
In fact, we use a slight variant
 of the EPR testing protocol, which uses two ancillas (see 
 Figure~\ref{fig:commgames}b), as it translates to a Hamiltonian more easily.

We first describe the modified EPR testing protocol. 
Alice has two registers, $L\otimes a_L$, and Bob has two registers denoted 
$a_R\otimes R$, where $R,L$ are of large dimension $D$ and 
$a_R$,$a_L$ are of constant dimension $d$. 
They wish to check whether their joint state $\ket{\psi}_{LR}$ on registers $L \otimes R$ is maximally entangled. 
First, Alice and Bob pre-share a maximally entangled state on $a_L\otimes a_R$: 
\begin{align}
	\ket{\phi_{d}} = \frac{1}{\sqrt{d}}\sum_{i=1}^{d}\ket{i}_{a_L}\ket{i}_{a_R} \label{entancillas}
\end{align}
Second, Alice applies the unitary $W=\sum_{i=1}^{d} \proj i \ot U_i$
to $a_L \otimes L$, and Bob applies $W^*$ to $a_R\otimes R$. 
Finally, they apply a projective measurement on $a_L\otimes a_R$ 
of the state $\ket{\phi_d}$ \eqref{entancillas}.
It is not difficult to see that this too is an EPR testing protocol; the
test passes with probability close to 1 if and only if the original state $\ket{\psi}_{LR}$ 
was very close to the maximally entangled state.

To encode this protocol into a Hamiltonian 
via the circuit-to-Hamiltonian construction, we use {\it two} independent, two-step clocks. 
(We will think of the circuit $W$ as well as $W^*$ as applied in a single
 time step). 
The Hamiltonian will thus act on four registers, $L,R$ and two enlarged 
registers, $A_L=a_L\otimes C_L$ and $A_R= a_R\otimes C_R$ with $C_L,C_R$ 
being the two $2$-dimensional spaces of the two clocks, respectively.  
We write the basis states of $A_L, A_R$ as $\ket{0,i}$ and $\ket{1,i}$ 
for $i\in \{1,..,d\}$. 

The Hamiltonian consists of the following terms. An ``initialization'' and ``output'' term 
on $A_L\otimes A_R$: 
\begin{align}
		H_{M} &= \sum_{s = 0}^{1}\sum_{i,j = 1}^{d} \ket{s,i}\bra{s,i}_{A_L}\otimes 
 \ket{s,j}\bra{s,j}_{A_R}
- \sum_{s = 0}^{1}\sum_{i = 1}^{d} \ket{s,i}\bra{s,i}_{A_L}\otimes \ket{s,i}\bra{s,i}_{A_R}
\end{align}
whose ground states have the form $\ket{1,i}\ket{0,j}$ and $\ket{0,i}\ket{1,j}$ for any $i,j$, but more importantly 
$\frac{1}{\sqrt{d}}\sum_{i=1}^{d} \ket{0,i}\ket{0,i}$ and $\frac{1}{\sqrt{d}}\sum_{i=1}^{d} \ket{1,i}\ket{1,i}$.
These two states are maximally entangled states of the ancillas when 
the ``clocks'' are both $0$ (initialization) or both $1$ (output).

Second, we have the  ``left-computation-checking'' Hamiltonian, which acts on the registers $a_L$ and $L$: 
\begin{align}
	H_L &= \frac{1}{2} \sum_{i=1}^{d} \left(\ket{0i}\bra{0i} +
          \ket{1i}\bra{1i}\right)_{a_L} \otimes \ii_{L}
- \frac{1}{2} \sum_{i=1}^{d} \ket{1i}\bra{0i} \otimes W
				- \frac{1}{2} \sum_{i=1}^{d} \otimes
                                \ket{0i}\bra{1i} \otimes W^\dagger.
\end{align}
Similarly, we define $H_R$, the ``right-computation-checking'' Hamiltonian which acts on $A_R$ and $R$, replacing $W$ by $W^*$ and $W^\dagger$ by $W^T$:
\begin{align}
	H_R &= \frac{1}{2} \sum_{i=1}^{d} \left(\ket{0i}\bra{0i} +
          \ket{1i}\bra{1i}\right)_{a_R} \otimes \ii_{R}
- \frac{1}{2} \sum_{i=1}^{d} \ket{1i}\bra{0i} \otimes W^*
				- \frac{1}{2} \sum_{i=1}^{d}  \ket{0i}\bra{1i} \otimes W^T.
\end{align}
The final Hamiltonian, $H=H_{M}+H_{L}+H_{R}$ is our desired counterexample. 
We claim that its unique, frustration-free ground state
is the ``history'' state
\begin{align}
	\ket{\Psi} &=
			\frac{1}{\sqrt{d}} \sum_{i=1}^{d} 
		\left(
						\ket{0,i}_{a_L} + \left(W\otimes \ii\right) \ket{1,i}_{a_L} 
		\right)
		\left(
						\ket{0,i}_{a_R} + \left(\ii \otimes W^* \right) \ket{1,i}_{a_R} 
		\right)
		\ket{\phi_{D}}_{LR}. \nonumber
\end{align}
It is not difficult to check that this is a maximally entangled state of 
dimension $dD$, by observing that the Schmidt rank is $dD$ 
and the coefficients are uniform.

\section{Discussion, related work and Open Questions} 
\label{sec:discussion}

We have shown that in both the context of EPR testing, as well as
ground states of Hamiltonians, constant resources suffice to enforce
what seems to be a global property.  Our results are reminiscent in
spirit to the classical PCP theorem, or more generally to property
testing.  The common theme is that a small amount of resources
(bits checked, Hamiltonian interactions, qubits transmitted, etc.)
serve to verify the properties of some large object. However, the fact
that such highly non-local properties as global entanglement can be
detected using local resources seems rather counter-intuitive, and
calls for further investigation in other contexts. 

Our results leave many questions open. Below we discuss them as well as the broader context of these results. 

{~}

\noindent{\bf The Area Law question}
Of course, the major open question of the 2-D area law, which was the
main motivation for this work, is left wide open.  A more modest goal
would be to reduce the degree in our construction to a constant. Such
a step already seems to require significant progress in our
understanding of related notions, e.g., parallel
circuit-to-Hamiltonian constructions (see e.g.,\cite{RecentTerhal}),
and quantum expanders which are geometrically constrained, as well as
the notion of quantum degree reduction, as a possible route towards
quantum PCP~\cite{AAV13}.  Alternatively, it might be true that the
generalized area law does in fact hold with bounded-degree
bounded-strength Hamiltonians.

Indeed, such a conjecture is not unplausible, and could potentially be
motivated by the following intuition.  The area law had been long
believed, without proof, to be related to another very important
physical property of gapped Hamiltonians: the exponential decay of
correlations in the ground state. This means that a Hamiltonian has an
associated correlation length $\xi$ such that
$|\ip{AB}-\ip{A}\ip{B}| \leq \|A\| \, \|B\| \, e^{-\ell/\xi}$ where
$\ip{X}:=\bra{G} X \ket{G}$ and $A,B$ are observables separated by a distance $\ell$. 
Such an exponential decay is 
known to hold on a lattice of any constant dimension, and in fact in
any constant-degree graph~\cite{HastingsK06}
\footnote{Why don't our constructions contradict this, since they will
  have large amounts of entanglement in the ground state?  Our
  large-dimension construction in \secref{HLMR} does fit the criteria
  of \cite{HastingsK06} to have constant correlation length, but there
  the entire graph has constant diameter.  Our construction in
  \secref{concreteH} has large degree and so the correlation-length
  bound from \cite{HastingsK06} is also growing with the system size.}
It is perhaps natural to conjecture that if correlations in gapped
Hamiltonians are in this way ``local'', entanglement is also local;
One way to quantify this is with the area law conjecture.  However, we
stress that only in 1-D this implication is known to hold~\cite{BrandaoH12}.

Our results (in particular \thmref{HlocalProperties}) provide a
counterexample to another possible version of the area law: one in
which the interaction degree is bounded, but the norms of the
interactions are required to be bounded only accross the cut, and
otherwise they can be polynomially large. The rationale of this
condition is that large norm terms on each side of the cut should only
increase the entanglement within the two regions on each side of the
cut and therefore by monogamy of entanglement only {\it decrease} the
entanglement across the cut. Our counterexample suggests that the
above monogamy-of-entanglement argument is too naive.

Another possibile version of the area law that might still hold is that a subsystem with
dimension $d_i$ at distance $\ell_i$ from the cut can contribute at most
$\log(d_i) e^{-\ell_i/\xi}$ entanglement, where $\xi$ is the correlation
length. 
Attempting to strengthen our counter-example may either rule
these conjectures out or, in failing to do so, give a hint of how
they might be proved. 

We note that the implications of an(y of the above forms of an) area
law for general systems are not yet fully understood.  For ground
states of gapped one-dimensional systems, proving an area law was an
important step towards proving that they can be efficiently
described~\cite{Hastings09} and that these descriptions can be found
efficiently~\cite{AradKLV12area,LandauVV14}.  For Hamiltonians on
general graphs, a partition into pieces with subvolume entanglement
scaling (i.e. region $S$ has $\leq \eps |S|$ entropy) would imply a
classical description accurate enough to be incompatible with the
quantum PCP conjecture~\cite{BH-product}.  Since entropy is a way to
count effective degrees of freedom, another interpretation of area
laws is that a quantum system can equivalently be represented by a
theory living on its boundary.  This idea is known as the holographic
principle, and is currently a major conjecture in quantum field
theory~\cite{Swingle12}. It remains to be clarified whether an area law of any of
the suggested forms can lead to a more succinct description of the
ground states.

{~}

\noindent{\bf Related work} 
There are several related works that we would like to mention.  First,
Gottesman and Hastings \cite{GottesmanH10}, Irani \cite{Irani10}, and
Movassagh and Shor \cite{MS14} have examined qudit chains with highly
entangled ground states for Hamiltonians whose gaps are
inverse-polynomial. To the best of our knowledge, our results cannot
be derived in a straightforward manner from these works. Here, we
focus on spin chains with a constant gap. One can attempt to get a
constant-gap version of the above constructions by using the
strengthening gadgets of Nagaj and Cao~\cite{NagajCao} as we did in
this paper. However, this fails to provide the desired counterexample,
since these gadgets introduce a complicated geometry of interactions,
and we would need to apply them for every edge. Thus, the size of the
cut in the resulting graph would no longer be small.  It is crucial
that in our present construction, the middle link is unchanged; only
the rest of the interactions need to be strengthened by gadgets.

We mention another relevant prior work \cite{Hastings-expander1, Hastings-notes}, which
described a state on a one-dimensional chain with a large amount of
entanglement across cuts (say $\log n$) but only short-range correlations.   The
claim about decaying correlations here is rather subtle: two regions
that are separated by a distance $\ell$ from each other and $\ell'$
from the boundary have correlation no greater than $e^{-\ell/\xi}(1 + n
e^{-\ell'/\xi})$.  In this way it avoids contradicting the relation
between decaying correlation and area law from \cite{BrandaoH12}.  See
\cite{Hastings-notes} for further discussion.  This result is
incomparable to ours because the states in question are not ground
states of a gapped $O(1)$-local Hamiltonian.

{~}

\noindent{\bf More general implications}
Finally, we believe that our results point at a potentially useful
link between two seemingly unrelated topics.  Our paper shows that a
counterexample to the generalized area law can be derived from an
entanglement testing protocol of limited communication and converting
it into a Hamiltonian using Kitaev's circuit-to-Hamiltonian
construction.  Our area-law violating Hamiltonian can be viewed as a
``tester'' of its highly entangled ground state, where the norm of the
Hamiltonian terms along the cut corresponds to the communication
complexity of the protocol.  Can any area-law-violating Hamiltonian be
connected to an entanglement-testing protocol with communication
pattern corresponding to the interaction graph of the Hamiltonian?
More generally, in what ways can Hamiltonians be viewed as {\it
  testers} for their ground states?  Whether such a ``translation''
always exists between entanglement testing protocols of limited
communication, and entangled ground states of Hamiltonians with
limited interactions between different parts of the system, remains to
be explored.  Making such an equivalence rigorous might open up a
whole new set of tools to studying the area law question, and more
generally, help develop better intuition for local Hamiltonians and
their ground states.  A related question is whether EPR testing is in
fact {\it equivalent} in some sense to the property of being a quantum
expander.

\section{Acknowledgements}
\label{sec:thanks}

The authors thank the Simons Institute (the Quantum Hamiltonian
Complexity program) where part of this work was done. DA acknowledges
the support of ERC grant 030-8301 and BSF grant 037-8574.  AWH was
funded by NSF grant CCF-1111382 and ARO contract W911NF-12-1-0486.  ZL
was supported by NSF Grants CCF-0905626 and CCF-1410022 and Templeton
Grants 21674 and 52536. DN thanks the Slovak Research and Development
Agency grant APVV-0808-12 QIMABOS.  MS is supported by the NSF Grant
No. CCF-0832787, ``Understanding, Coping with, and Benefiting from,
Intractability'' and by CISE/MPS 1246641. UV was supported by ARO
Grant W911NF-12-1-0541, NSF Grant CCF-0905626, and Templeton Grants
21674 and 52536.

\end{document}